\definecolor{customcolorblue}{HTML}{4573ae}
\newcommand\editcolor[1]{{#1}} 
\newtheorem{theorem}{Theorem}
\newtheorem{lemma}[theorem]{Lemma}
\theoremstyle{definition}
\begin{document}



\title{Optimizing compilation of error correction codes for $2\times N$ quantum dot arrays and its NP-hardness}

\author{Anthony Micciche}
\email{amicciche@umass.edu}
\affiliation{Manning College of Information and Computer Sciences, University of Massachusetts Amherst, 140 Governors Drive
Amherst, Massachusetts 01003, USA}

\author{Feroz Ahmed Mian}
\affiliation{Manning College of Information and Computer Sciences, University of Massachusetts Amherst, 140 Governors Drive
Amherst, Massachusetts 01003, USA}

\author{Anasua Chatterjee}
\affiliation{
QuTech and Kavli Institute of Nanoscience, Delft University of Technology, Delft, The Netherlands}

\author{Andrew McGregor}
\affiliation{Manning College of Information and Computer Sciences, University of Massachusetts Amherst, 140 Governors Drive
Amherst, Massachusetts 01003, USA}

\author{Stefan Krastanov}
\email{skrastanov@umass.edu}
\affiliation{Manning College of Information and Computer Sciences, University of Massachusetts Amherst, 140 Governors Drive
Amherst, Massachusetts 01003, USA}

\date{\today}

\begin{abstract}
The ability to physically move qubits within a register allows the design of hardware-specific error-correction codes, which can achieve fault-tolerance while respecting other constraints. In particular, recent advancements have demonstrated the shuttling of electron and hole spin qubits through a quantum dot array with high fidelity. It is therefore timely to explore error correction architectures consisting merely of two parallel quantum dot arrays, an experimentally validated architecture compatible with classical wiring and control constraints. Upon such an architecture, we develop a suite of heuristic methods for compiling any Calderbank-Shor-Steane (CSS) error-correcting code's syndrome-extraction circuit to run with a reduced number of shuttling operations. We demonstrate how column-regular qLDPC codes can be compiled in a provably minimal number of shuttles that is exactly equal to the column weight of the code when Shor-style syndrome extraction is used. We provide tables stating the number of required shuttles for many contemporary codes of interest. In addition, we provide a proof of the NP hardness of minimizing the number of shuttle operations for general codes, even when using Shor syndrome extraction. We also discuss how one could get around this by placing blanks in the ancilla array to achieve minimal shuttles with Shor syndrome extraction on any CSS code, at the cost of longer ancilla arrays.
\end{abstract}

\maketitle


\section{Introduction}

Motivated by the potential of far easier scalability than other quantum computing architectures due to industrial fabrication compatibility\cite{zwerver_thesis} as well as long coherence times\cite{Burkard2023}, in recent years there have been numerous advancements in experimentally realizing spin qubits in electrically defined quantum dots. In particular, the repeated shuttling of an electron or hole up and down an array of quantum dots with a per-hop error rate less than 0.01\%~\cite{Wang2024,zwerver23shuttling} has been demonstrated. On this experimentally validated foundation of being able to physically move qubits along arrays of quantum dots, we explore quantum computing architectures consisting merely of two of these quantum dot arrays, as shown in Fig.~\ref{fig:hardware_abstraction}. These hardware layouts have been referred to as a "$2 \times n$" architectures \cite{ansoloni2020,ansoloni2023}. Current fabrication restrictions on the fan-out, wiring, and wirebonding of these semiconductor quantum dot arrays make it necessary to explore the possibility of fault-tolerant computation in this restricted configuration of two parallel arrays. Natural implementations of error correction codes for such architectures were recently proposed in~\cite{siegel2024}, however, without yet the capability of optimizing the syndrome extraction circuits under the significant connectivity constraints of the hardware for codes beyond the surface code. In this work, we provide techniques for efficiently compiling \emph{any} CSS quantum error correction code to such a $2\times n$ architecture. In addition, we show that "column-regular" qLDPC codes can always be compiled to run with a small constant number of shuttles, as their property guarantees the existence of a trivial compilation solution under a certain style of fault-tolerant syndrome extraction.

\begin{figure}
    \centering
    \includegraphics[width=\linewidth]{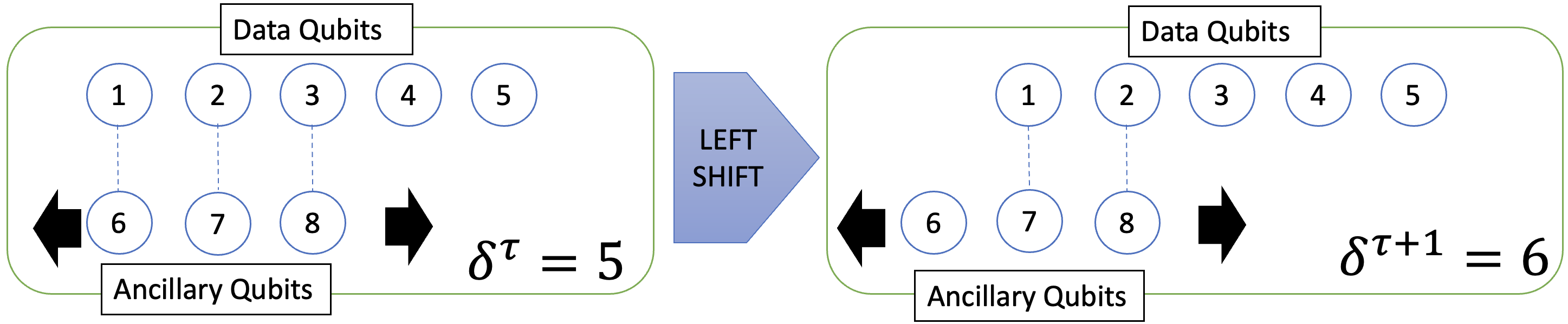}
    \caption{\textbf{Schematic of the quantum dot array hardware}. On the left is a possible configuration of this quantum memory, which we parameterize as $\delta^{\tau}=5$, where $\delta^\tau$ represents the offset between the indices of the two rows of qubits (data and ancillary qubits in the case of quantum error correction) at time $\tau$. This $\delta^\tau=5$ implies that cross-array interaction is only permitted between qubit pairs (1,6), (2,7), and (3,8). If we shuttle the bottom row to the left, we would then have the right figure which has $\delta^{\tau+1}=6$ and interaction permitted only between (1,7) and (2,8).}
    \label{fig:hardware_abstraction}
\end{figure}

More explicitly, we propose techniques that convert one quantum circuit to an equivalent circuit, which will then require fewer shuttling operations than the original to execute, thereby causing less error to accumulate. We refer to this process of modifying the initial syndrome extraction circuit as "compilation".

While portions of our approach could potentially be applied to running arbitrary circuits on this architecture, we focus our discussion and assumptions on syndrome extraction circuits used in quantum error correction, which is a natural fit for a two-row architecture as we can simply assign data qubits to the top row, and ancillary qubits to the bottom row. Shuttling these rows past each other enables us to perform gates between distant qubits, enabling the efficient implementation of codes that even require long-range connectivity such as qLDPC codes, a very promising class of codes for achieving large-scale quantum error correction~\cite{breukman2021,gottesman2014faulttolerantquantumcomputationconstant}. This work aims to provide insight on which error correction codes would be most effective for this hardware. Furthermore, we present tools for evaluating how much shuttling error an arbitrary error-correcting stabilizer code would incur. More general layouts, connecting separate $2\times n$ blocks are now also being suggested \cite{siegel2025snakesplane} and our work will be of significant benefit to such large-scale deployments as well.

This paper is structured as follows: In Section~\ref{sec:foundations}, we present a way to formalize the general minimization of qubit shuttles. In Section~\ref{sec:heuristics}, we then present a simple way of visualizing the problem formalized in the preceding section. This visualization naturally lends itself to simple heuristic methods that, while not always optimal, yield surprisingly effective results for certain circuits. Following that, in Section~\ref{sec:shor}, we discuss a special case of our problem corresponding to fault-tolerant Shor-style syndrome extraction circuits and how this special case gives us enough structure to substantially improve our compilation. After these discussions on how to reduce the number of shuttling operations, in Section~\ref{sec:comp_results} we provide many relevant examples of both column regular and irregular codes, and provide the number of shuttles our methods were able to compile their syndrome extraction circuits to. This then culminates in Section~\ref{sec:ldpc}, where we discuss how regular column weight LDPC codes compile with a provably minimal number of shuttles under fault-tolerant syndrome extraction. Further elaborating on ways to construct an error correcting scheme with minimum number of shuttle operations of the ancilla array, Section~\ref{sec:ldpc} also discusses another way of viewing the compilation problem, trading additional space in the ancilla array for the guarantee of minimum shuttling operations on any code. Finally, in the concluding remarks, we discuss some of the limitations and potential future work, such as the importance of investigating post-selected cat state preparation for use with Shor syndrome extraction.

In the Appendix, we provide a proof of the NP-hardness of compilation via reduction from 3-partition, a brief discussion of how the larger error correction pipeline would look including the cat state preparation required for fault-tolerant syndrome extraction on $2 \times n$ architecture, as well as a gate-by-gate simulation of the error correction pipeline, verifying that our techniques of $2 \times n$ compilation do indeed result in functional  syndrome circuits with better error rates corresponding to fewer shuttles. The Appendix also includes a few additional figures and examples to help illustrate the methods presented here.

\section{Foundations of $2 \times n$ Circuit Compilation\label{sec:foundations}}

We formalize the problem as follows: We are given a circuit as an array of $m$ two-qubit gates, $\mathcal{C}=A_1,\ A_2,\ ...,\ A_m$, and we say that $\delta(A)$ represents the difference in the indices of qubits involved in gate $A$. For example, if $A$ is a CNOT from a qubit at index $i$ to a qubit at index $j$, then $\delta(A)$ = $|i-j|$. We assume we can lay out our qubits such that each gate always involves one qubit from the top row and one qubit from the bottom row (e.g.\ in a syndrome extraction circuit, the top row would be data qubits and the bottom row would be ancillary qubits). \editcolor{We also let $\delta^t$ represent the state of our system at a given time step $t$, as enumerated and shown in Fig.~\ref{fig:hardware_abstraction}. That is, $\delta^t$ denotes the difference in indices between cross-row pairs of qubits. Because $\delta^t$ is thus a discrete measure of the difference of positions of the two rows and we consider 1-dimensional dot arrays, all values of delta are reachable by shuttling only one of the two rows. Therefore, as the way we define $\delta^t$ depends only on the positions of the two rows relative to each other, our abstraction holds when only one of the two rows is allowed to move, and so we assume a stationary row of data qubits throughout this work.} A more detailed sketch of our envisioned hardware is presented in Fig. \ref{fig:full_2xn}. 

Notice that a gate $A$ can execute only when $\delta(A) = \delta^t$, i.e.\ when the two rows are aligned such that the two qubits involved in gate $A$ are facing each other. It will often be convenient to track the $\delta$ values of the gates in the circuit $\mathcal{C}$, which we denote by $\mathcal{G} \vcentcolon = [\delta(A_1), \delta(A_2), ..., \delta(A_m)]$. We now discuss the two general operations that we can perform on a given syndrome-extraction circuit $\mathcal{C}$ in order to compile it into a $\mathcal{C'}$ that extracts the same syndrome, only now requiring fewer shuttle operations:

\begin{itemize}
    \item Change the order of gates in $\mathcal{C}$. We refer to this as "gate shuffling".
    \item Apply a permutation to the indices of the qubits. In other words, define a bijection from $\{1, ...,n+s\}$ to $\{1,...,n+s\}$ where $n$ is the number of data qubits and $s$ is the number of ancillary qubits. Then for each gate $A \in \mathcal{C}$, apply this bijection to both indices. We refer to this as "qubit re-indexing". For an example, see Fig.~\ref{fig:reindexing_example}.
\end{itemize}

For qubit re-indexing, there are no restrictions on how to define the bijection, as qubit indices are arbitrary in general, while in the case of gate shuffling, gates cannot arbitrarily be reordered. However, within a syndrome extracting circuit for a CSS code, all gates measuring the $X$ stabilizers may be permuted arbitrarily, and the same holds true for gates measuring $Z$ stabilizers. Similar to the idea of performing syndrome extraction on one type of stabilizers while shuttling in one direction and then extract the other stabilizers while shuttling back in the other as proposed in~\cite{siegel2024}, we also will assume our code is CSS, and compile the syndrome extraction circuits for $X$ and $Z$ checks separately (e.g.\ by re-indexing only ancillary qubits and keeping the data qubit indices fixed).

\begin{figure}
    \centering
    \includegraphics[width=\linewidth]{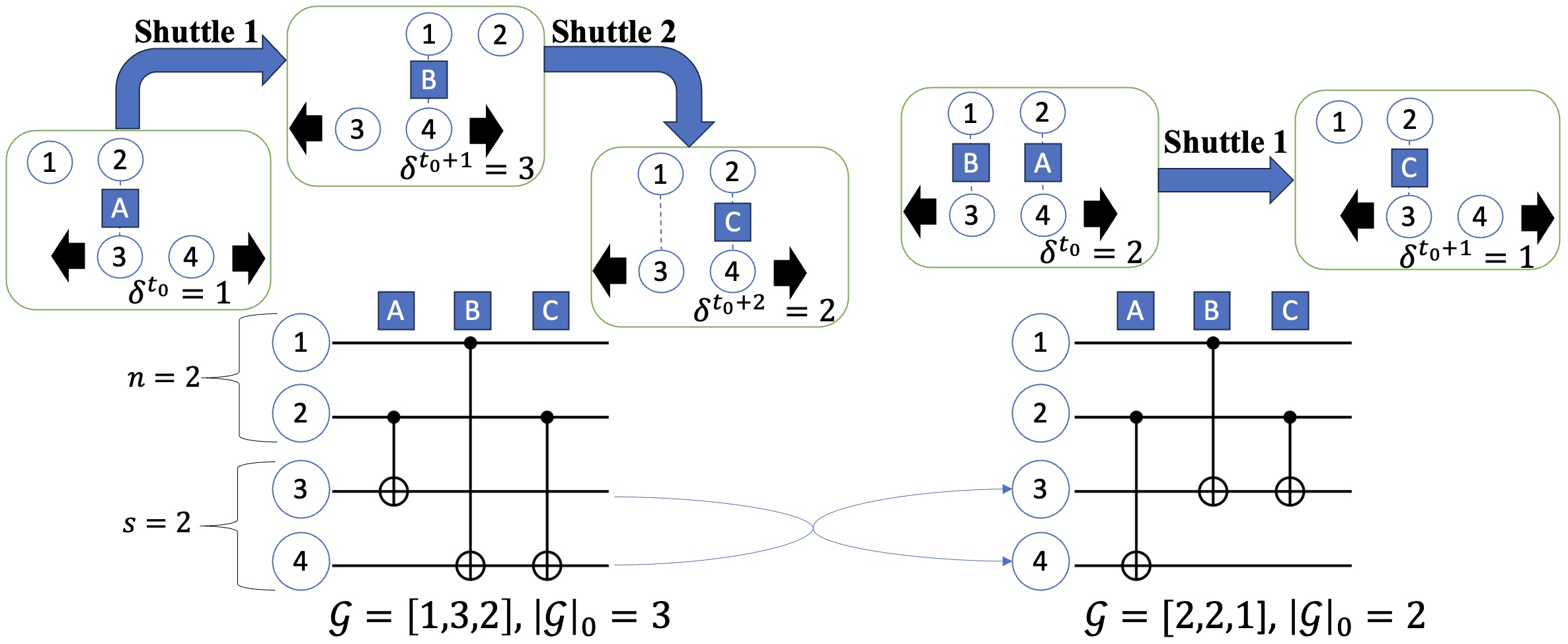}
    \caption{\textbf{Example of how qubit re-indexing reduces shuttles}. On the left is an unoptimized circuit. $\mathcal{G}$ corresponds to the values of $\delta$ for each gate in the circuit, $\delta(A), \delta(B),$ and $\delta(C)$. $|\mathcal{G}|_0$ denotes the number of unique elements in $\mathcal{G}$. Notice that to run this circuit on a $2 \times n$ architecture, two shuttles (or $|\mathcal{G}|_0$ positions) are necessary. On the right is the same circuit, except the indexing of the qubits has changed, which then allows the same circuit as before to run, this time only requiring a single shuttle.}
    \label{fig:reindexing_example}
\end{figure}

\begin{figure}
    \centering
    \includegraphics[width=\linewidth]{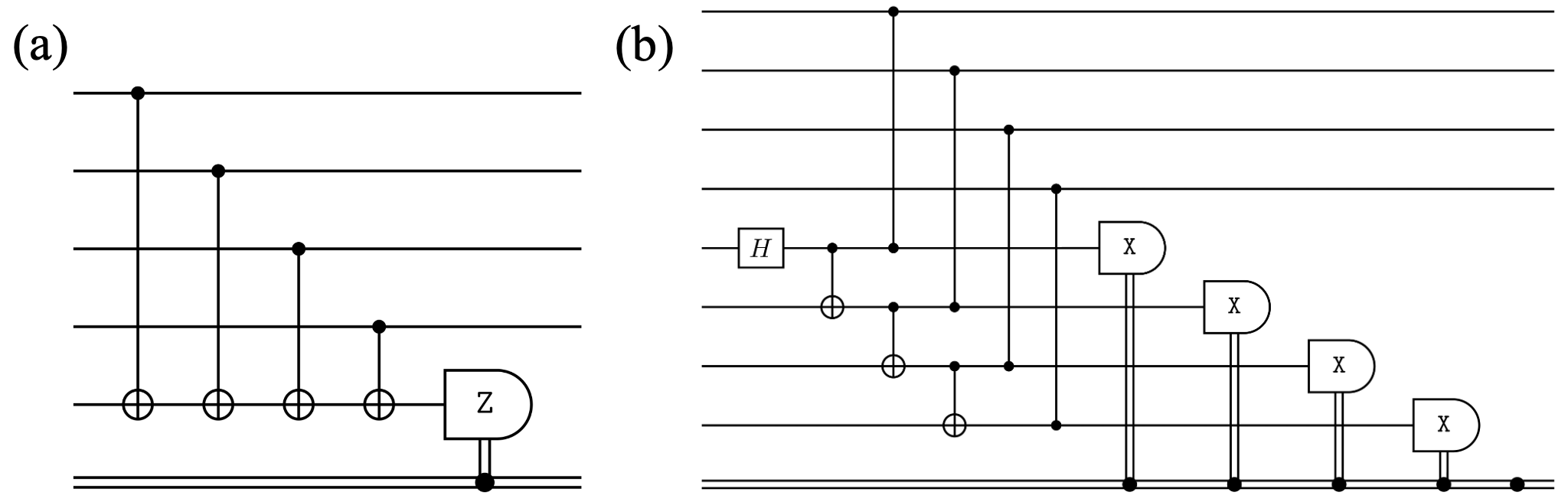}
    \caption{\textbf{Measuring the parity of $Z_1Z_2Z_3Z_4$ via two types of syndrome extraction.} a) \textbf{Naive syndrome extraction}: Parity measurement is given by coupling all pertinent data qubits to single ancillary qubit, and then performing a measurement. b) \textbf{Shor-style syndrome extraction~\cite{shor1997faulttolerant}}: Each data qubit indicated by a non-identity element in a weight $w$ stabilizer is coupled to its own ancillary qubit. For that single parity check, the participating ancilla are beforehand prepared in a $w$-qubit GHZ state. Each ancilla is then measured, and the classical XOR is taken over the resulting values to yield a the parity of the stabilizer measurement. Double lines here represent the flow of classical information.}
    \label{fig:syndrom_circuits}
\end{figure}

We refer to a circuit that measures the parity of a weight $w$ stabilizer by applying $w$ gates to a single ancillary qubit in a non-fault-tolerant way as "naive syndrome measurement" (see Fig.~\ref{fig:syndrom_circuits}a). For the purposes of compilation, this corresponds to the general problem, as each ancilla may have multiple gates attached to it. We consider only the two-qubit gates in the circuits to be compiled (i.e., single-qubit gates are folded into the neighboring two-qubit gates and not explicitly tracked). 

\subsection{Gate Shuffling\label{sec:gate_shufle}}
Before considering the more interesting problem of qubit re-indexing, we first describe how for a given ordering of the qubits, changing the gate order can reduce the number of required shuttles. This will be an implicit primitive behind the definition of the objective over which we want to minimize when solving the qubit re-indexing problem.

Starting from some arbitrary order of a circuit $\mathcal{C}=[A_1,A_2,...,A_m]$, referring to Fig.~\ref{fig:hardware_abstraction}, it should become obvious that if two distinct (acting on different qubits) gates  $A_i$ and $A_j$ have the same $\delta$ value, then both gates can run concurrently at time step $t$ if the relative position of the two rows is $\delta^t = \delta(A_i)=\delta(A_j)$. 

Now because gates with the same $\delta$ can run concurrently, by merely permuting the gates of $\mathcal{C}$ so that gates of the same $\delta$ are adjacent to one another (sorting $\mathcal{C}$ by $\delta$ is a straightforward way to achieve this), we can partition $\mathcal{C}$ into a collection of sub-arrays for each corresponding different value of $\delta$. We call the number of such sub-arrays $|\mathcal{G}|_0$, where $|\mathcal{G}|_0$ is defined as the number of unique values within $\mathcal{G}$ -- the array that contains the $\delta$ values for the current order of gates. All gates in a given sub-array can be executed concurrently, and a single shuttling operation is required to set the architecture in the correct configuration for the corresponding sub-array to be executed. We can therefore state that $|\mathcal{G}|_0$ represents the minimum number of required shuttles obtainable by gate shuffling, as we can always trivially sort the gates by their corresponding $\delta$, assuming all gates in $\mathcal{C}$ commute. 

Building on this result of how to perform gate shuffling to achieve $|\mathcal{G}|_0$ shuttles, we now discuss qubit re-indexing with the goal of finding the bijection of qubit indices that would best minimize $|\mathcal{G}|_0$, further reducing the required number of shuttle operations. To solve this, we fix the data qubits, and only think about re-indexing the ancillary qubits. While the main motivation for fixing the data qubits is to make the problem more tractable, as mentioned before, fixing the data qubits also causes the case of having different $X$ and $Z$ extraction circuits for CSS codes more manageable as well. 

\subsection{Reducing Shuttles via Ancillary Qubit Re-indexing\label{sec:qubit_reindexing}}

 For a given naive syndrome-extraction circuit $\mathcal{C}$, we would like to find a bijection of qubit indices that minimizes the resulting $|\mathcal{G}|_0$ (i.e.\ the number of necessary shuttles). As an example, we refer to Fig.~\ref{fig:reindexing_example2}, where we would like to reorder qubits 6, 7, and 8, noting that when the index of an ancillary qubit is changed, the $\delta$ values of all gates connected to it will also change. To account for this, we partition $\mathcal{G}$ into $s$ sets, where $s$ is the number of ancillary qubits, and each set will contain the $\delta$ values for the gates touching the corresponding ancillary qubit. In the example present in the left-hand portion of Fig.~\ref{fig:reindexing_example2}, $\mathcal{G} = [3,4,5,2,3,5]$ is partitioned into sets $R_6,R_7,R_8$, equal to respectively $\{3,4,5\}, \{2,3\}, \{5\}$, where $\{3,4,5\}$ are the $\delta$-values for all gates connected to qubit 6, $\{2,3\}$ are the $\delta$ values for all gates connected to qubit 7, and $\{5\}$ is the $\delta$ of the one gate connected to qubit 8. As we are interested in reordering these ancillary qubits, we need to see how their current position affected the values of $\mathcal{G}$. For this reason, we label the ancillary qubits $+1,+2,...,+s$, and notice that each position augments the values in corresponding set element-wise. For example, qubit 6 corresponds to the set $R_6 =\{3,4,5\}$, which we can then write as $R_6 = \{2,3,4\}+1$, where the $+1$ encodes that this is the first ancillary qubit. We refer to the set $P_i$ of $R_i = P_i+k_i$ as the \textit{primitive set} of the $i^{\text{th}}$ ancillary qubit ($k_i$ is the integer corresponding to the current position of $P_i$). The primitive set encodes the relative positions between all top-row (data) qubits that interact with a given bottom-row (ancillary qubit), while $k_i$ encodes the current position of that ancillary qubit in the bottom row. On the right-hand side of Fig.~\ref{fig:reindexing_example2}, the ancillary qubits have been reordered to minimize $|\mathcal{G}|_0$. Notice how the primitive sets remain the same under this reordering (how to derive the primitive sets from a given error correcting code is discussed in more detail in Section~\ref{sec:ldpc}). Furthermore, we can concretely define an admittedly not very friendly combinatorial optimization problem as:\\

\textbf{Problem Formulation}: Given a collection of $s$ sets $\mathcal{S} = \{P_1,P_2,...,P_s\}$, find a bijection $\pi: [s] \rightarrow [s]$, such that $|\mathcal{G}|_0 $ is minimized ($\mathcal{G}$ depends on the $R$ sets which each depend on their corresponding $P$ set and the index reordering $\pi$). More formally: 
\begin{equation}
    \min_\pi |\mathcal{G}|_0  = \min_\pi \Bigl| \bigcup_{i =1}^s \bigcup_{j \in P_{i}} \{j+ \pi(i) \}\Bigr|.
\end{equation}

As $|\mathcal{G}|_0$ represents the number of required shuttles, this procedure of minimizing $|\mathcal{G}|_0$ by reordering the ancillary qubits results in a circuit that is more efficient by requiring fewer shuttles and would then have a lower execution time due to increased gate parallelism. 

Although we prove in Appendix~\ref{sec:NP_proof} that, even in the special case where all primitive sets are cardinality $1$, finding the optimal reidexing is an NP-hard problem, in the following section, we provide some simple heuristics to act as a baseline for solving this problem. Furthermore, in section~\ref{sec:shor}, we provide a more substantial heuristic algorithm for minimizing $|\mathcal{G}|_0$ when we are promised that the input circuit is in the style of Shor syndrome extraction (primitive sets are cardinality $1$).

\begin{figure}
    \centering
    \includegraphics[width=\linewidth]{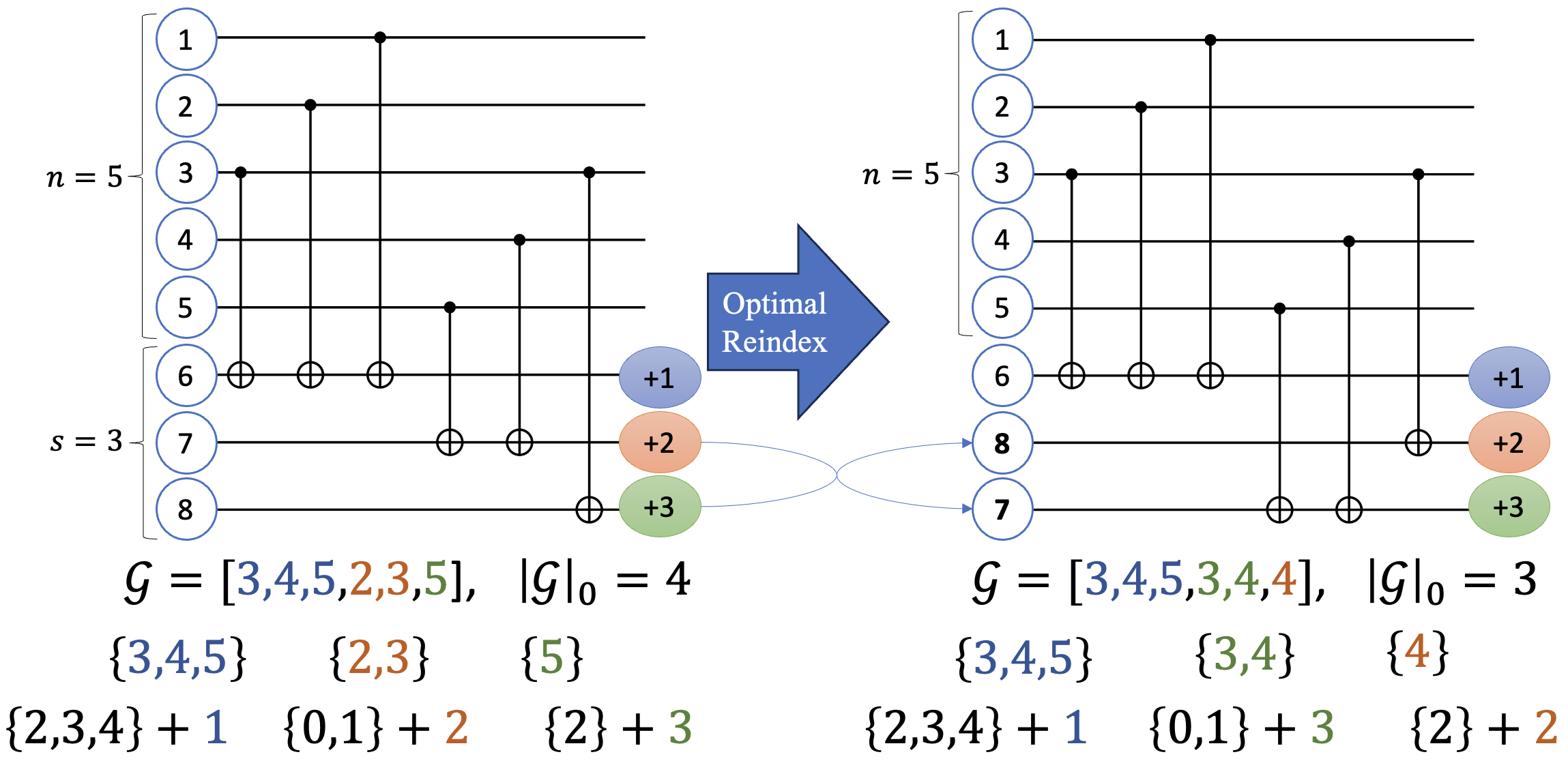}
    \caption{\textbf{Illustration of primitive sets under different re-orderings.} On the left is an arbitrary circuit, and below are the $\delta$ values for each of its gates, $\mathcal{G}$, as well as the number of unique values, $|\mathcal{G}|_0$. Below that, we partition $\mathcal{G}$  into three sets, each corresponding to the gates connected to each ancillary qubit. Below each set is a "primitive set" with some integer being added element-wise. This addition encodes the current indexing of the ancillary qubits. On the right, is a different way of indexing these ancillary qubits. Notice that although the values in $\mathcal{G}$ have changed, the primitive sets remain the same.
}
    \label{fig:reindexing_example2}
\end{figure}

\section{Heuristics for General Compilation\label{sec:heuristics}}
In this section, we discuss how to develop heuristics for the naive syndrome qubit re-indexing problem stated in the previous section. The first step of which is converting the problem to something that is easier to visualize. Again, referring to the example present in Fig.~\ref{fig:reindexing_example2}, we now recast it to stacking blocks on a "staircase". We do so by thinking of each primitive set as a chain of blocks, of which there are two varieties and the staircase to discuss (Fig.~\ref{fig:blocks_example_1} for a visual reference):
\begin{itemize}
    \item \textbf{Solid blocks} (Blue) These blocks represent the presence of an element, or what used to be a target ($\oplus$ for CNOT gates) in the quantum circuit diagram. As we are fixing the data qubits, we can also think of solid blocks as gates.
    \item \textbf{Air blocks} (Green) These blocks represent the absence of an element and therefore are referred to as "air". The reason for them is to preserve the structure and constraints of the original problem.
    \item \textbf{Staircase} (Gray) The staircase represents the set $A= \{1,...,s\}$ that we are assigning primitive sets to. By placing $P_i$ on the first step of the staircase, we thereby are shifting all elements of $P_i$ to the right by one, visually adding 1 to the set element-wise. If we were to place $P_i$ on the $j$th step, then we would be adding $j$ to $P_i$ element-wise, and the resulting set corresponds to what we call $R_i$. The goal then is to place each chain of blocks on this staircase such that the number of columns containing at least one blue block is minimized, as each column in this staircase abstraction corresponds to a particular configuration of the $2 \times n$ architecture. Moreover, minimizing the number of required configurations of the two rows of qubits is exactly the same as minimizing the number of required shuttling operations. 
\end{itemize}

In order to create the corresponding chain of blocks for a given primitive set $P$, we start with a number of blocks equal to the largest element in $P$ plus one. Then enumerating those blocks from $0$ to the largest element in $P$, blocks that correspond to elements in $P$ are solid blocks, and all other blocks are air blocks. The example provided in Fig.~\ref{fig:blocks_example_1} has only contiguous solid blocks, however, in general this will not be the case and is a source of difficulty in finding good heuristics (Fig.~\ref{fig:steane_staircase} has an example of non-contiguous solid blocks). The main benefit of this stacking block interpretation is that now it becomes much more simple to reason about heuristics and the problem as a whole. For instance, two rather decent initial heuristics for finding good orders are:
\begin{itemize}
    \item Sort these chains of blocks by total length, placing the longest ones at the top (+1 is the top of the staircase)
    \item Sort these chains by the number of air blocks preceding the first solid block
\end{itemize}
 Because each of these will only take $O(n\log(n))$ time, it is quite efficient to run both and simply take the better of the two choices, breaking ties with the heuristic not being primarily sorted by. Thereupon, as the map from the old order to the new order discovered for packing blocks on the staircase directly corresponds to the bijection $\pi$ mentioned in the problem formulation, we can apply this mapping to the ancillary qubits in the original circuit to a create a new compiled circuit. We call this method ancillary heuristic re-indexing (AHR). Notice that both sorting metrics would work in the simple problem of Fig.~\ref{fig:blocks_example_1}. To see these heuristics finding optimal solutions on real (although pedagogical) codes, such as Shor's 9 qubit code, and Steane's 7 qubit code, refer to Appendix~\ref{sec:more_block_examples}. These techniques also are sufficient to find optimal ancilla orders on column-regular LDPC codes when using Shor syndrome extraction (see Table.~\ref{tab:shor_XZ}). A third heuristic which is also checked when we mention AHR (the best of three is chosen as the output) is to sort all of the primitive sets by the first element in their set in descending order. Values that have occurred already are de-prioritized according to the number of times they've occurred. For example, this would sort $\{6,6,6,5,5,4,3,3,2,1,1,0\}$ as $\{6,5,4,3,2,1,0,6,5,3,1,6\}$. This was originally introduced as first guess to solving the case when all primitive sets are size 1 (this corresponds to Shor-syndrome extraction and is discussed in the following section), however, we find that occasionally this method finds a decent solution when the first two heuristics fail. For example, this third heuristic is the best of the three for the toric code's naive syndrome extraction circuit. 

\begin{figure}
    \centering
    \includegraphics[width=\linewidth]{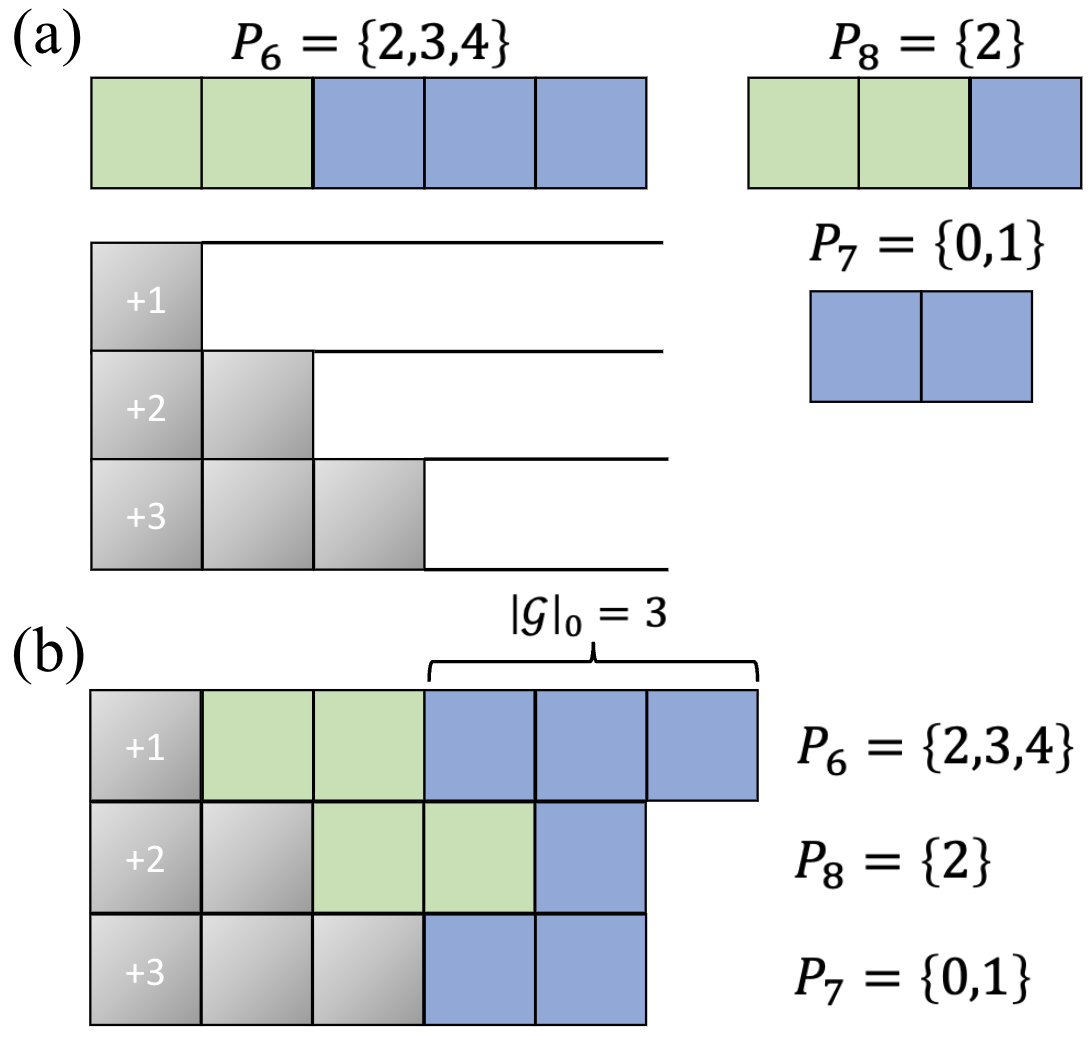}
    \caption{\textbf{Visualization of block placement}. a) Primitive sets of the circuit in Fig.~\ref{fig:reindexing_example2} represented in our so-called "block form". The green blocks represent the absence of an element in the set, while the blue blocks represent the presence of the corresponding element. Placing these blocks on the staircase represent assignments. For examples, placing $P_6$ on +1 would cause all its blocks to shift to the right by one. After placing all blocks, all columns that contain at least one blue block would correspond to a unique element in $\mathcal{G}$, and recall unique values in $\mathcal{G}$ correspond to different configurations of the $2 \times n$ architecture and therefore require shuttles. b) Optimal placement for this problem.
}
    \label{fig:blocks_example_1}
\end{figure}

\section{Shor-style fault-tolerant Circuit Compilation\label{sec:shor}}
In the previous section we discussed the general problem of assigning primitive sets to integers. However, the circuits that the sets were derived from were not quite representative of fault-tolerant quantum error correcting circuits, as they do not provide any protection against the propagation of errors to the ancillary qubits. To remedy this and to assess a simplified version of the problem, in this section we consider a special case where all primitive sets are of size 1 (or equivalently, this is the special case where no ancillary qubit will participate in more than a single two-qubit gate), which corresponds directly to fault-tolerant Shor-style syndrome extraction~\cite{shor1997faulttolerant}. Whereas before we were trying to optimize circuits that use one ancillary qubit per stabilizer check, we now model the circuit that uses $w$ ancillary qubits for a weight $w$ stabilizer check (see Fig.~\ref{fig:syndrom_circuits}b). There are some state-preparation concerns for the ancilla involved in this technique, as discussed in Appendix~\ref{sec:cat_prep}, but they are of no consequence to what we discuss here.

This leaves us with just the optimization of two qubit gates between the data and ancillary qubits as in the previous sections. This cost of extra qubit overhead is balanced not only by guarantees of fault tolerance, but also by additional structure that will allow us to compile the code to run in some cases far fewer shuttles than the naive case. We show in Section~\ref{sec:ldpc} that column-regular codes always compile optimally using this syndrome extraction technique.

\subsection{Special Case of Qubit Re-indexing}
As we were doing in the naive syndrome-extraction circuit case, we again only re-index ancillary qubits, so that the data qubits retain the same labeling between the $X$ and $Z$ circuits. However unlike before, we conjecture that in the case of Shor syndrome circuits, re-indexing the data qubits will never lead to an advantage in shuttle minimization.

We now note how the structure of Shor-syndrome circuits causes our problem definition to change. Because all primitive sets become size 1 (i.e.\ an ancillary qubit interacts with exactly one data qubit, ensuring fault-tolerance), instead of finding an assignment from integers to sets, we now only need to find an assignment between integers and integers. More formally, making use of all sets being singletons, we can simplify the problem formulation from the naive syndrome case to:\\

\textbf{Shor Circuit Problem Formulation}: Given a multi-set of $s$ integers $\mathcal{S} = \{p_1,p_2,...,p_s\}$, find a bijection $\pi: [s] \rightarrow [s]$, such that $|\mathcal{G}|_0 $ is minimized, which can be written as follows: 
\begin{equation}
    \min_\pi |\mathcal{G}|_0  = \min_\pi \Bigl| \bigcup_{i=1}^s \{p_i+ \pi(i) \}\Bigr|
\end{equation}
Just as in the naive syndrome case, the minimization of $|\mathcal{G}|_0 $ corresponds to the minimization of the number of required number of shuttles, and the $\pi$ we find in this process describes how we should reorder our ancillary qubits. As before, doing so also increases efficiency and decreases the amount of time required per syndrome extracting cycle. Refer to Fig.~\ref{fig:shor_reindex} for an example.\\

 \begin{figure}
    \centering
    \includegraphics[width=\linewidth]{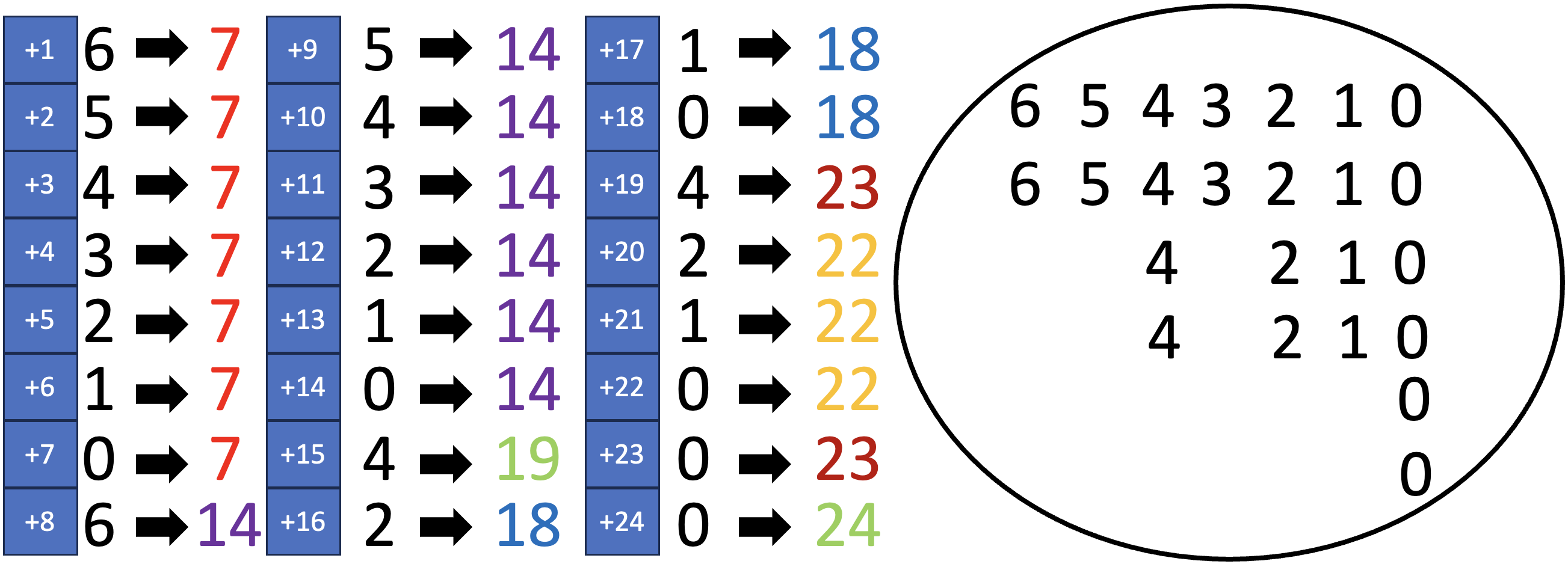}
    \caption{\textbf{Feasible solution to re-indexing of Shor-style syndrome extraction for the Steane code}. On the right is our set $\mathcal{S}$, derived from the Shor-style syndrome extraction circuit for the Steane code (X and Z checks are compiled together to create a more pedgocical example). On the left we have a non-optimal ordering. Here we are placing the numbers from the circle (each number corresponding to a gate in the syndrome circuit) onto this "ladder", with the goal being to minimize the number of different output values obtaining by the addition of the placed value with the index it was placed in. Each output value corresponds to the $\delta$ of the placed gate after re-indexing. The ordering in this figure results in 7 different output (or $\delta$) values, although an optimal solution of 6 values exists, namely Fig.~\ref{fig:steane_chains}.
}
    \label{fig:shor_reindex}
\end{figure}

 \begin{figure}
    \centering
    \includegraphics[width=\linewidth]{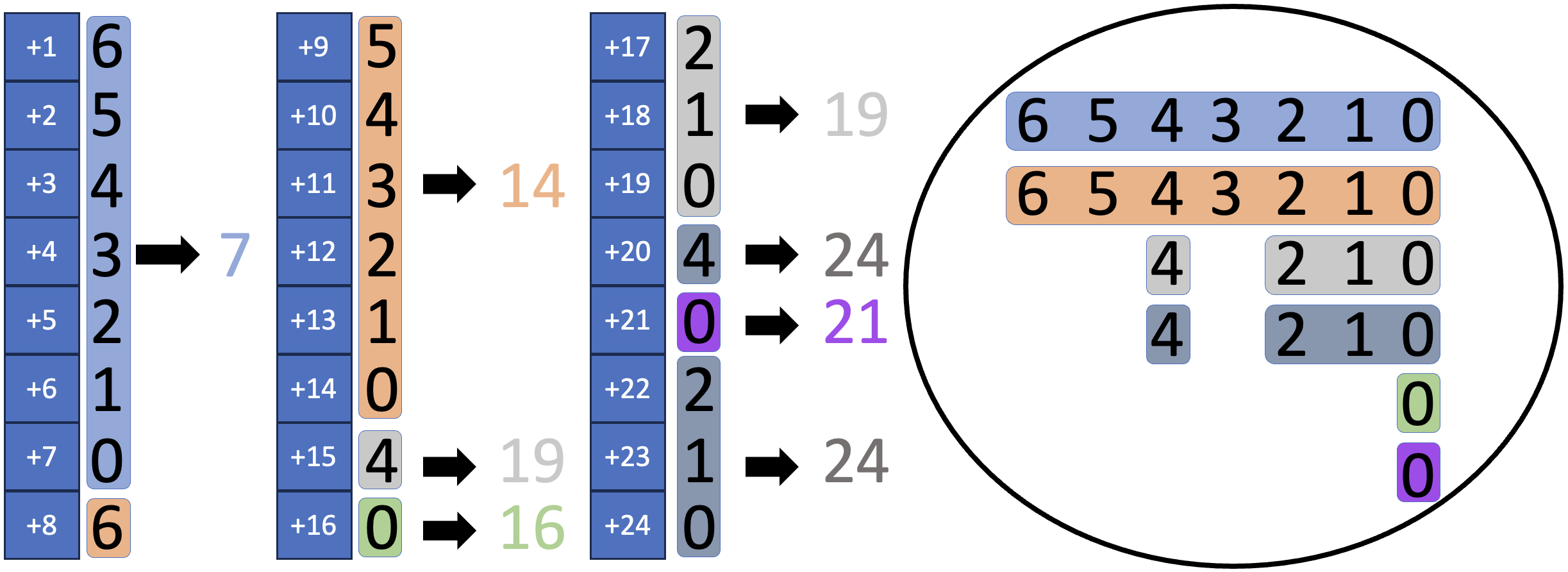}
    \caption{\textbf{Optimal solution to Shor re-indexing on Steane code}. Using  Algorithms 1 and 2, we can find an optimal reordering of qubits for the Steane code's Shor-style syndrome extraction circuit. On the right, we group the elements into six chains, the maximum column weight of the code. On the left, we show how we can pack these without breaking these chains up, giving us an optimal result of six shuttles. Each output value from mapping values to the ladder corresponds to a particular configuration of the $2 \times n$ architecture. Note: as with Fig.~\ref{fig:shor_reindex}, the $X$ and $Z$ checks are presented together for pedagogical reasons.
}
    \label{fig:steane_chains}
\end{figure}

This new formulation allows us to make a few claims about the hypothetical optimal solution structure, which we can leverage in the design of our algorithm.
First, we notice that if a value $v$ occurs $f$ times in $\mathcal{S}$, then $\min_\pi |\mathcal{G}|_0 \geq f$. This follows simply from us considering a sub-multiset $U := \{v, v, ... v \} \subseteq \mathcal{S}$, and $U$ contains $f$ elements. Now, if we add to each element of $U$ a value from $\{1,2,...,s\}$ without repetition, and call the resulting multiset $U'$, then $U'$ will contain exactly $f$ unique elements, regardless of how we choose elements from $\{1,...,s\}$. 

Therefore, we can extrapolate this result to saying that a lower bound on the minimum number of shuttles will be the maximum number of times any number appears in $\mathcal{S}$, which corresponds to the maximum column weight of a code's stabilizer tableau (Section~\ref{sec:ldpc} elaborates on this point). Referring to Fig.~\ref{fig:shor_reindex}, one might ascertain that an optimal solution would be some collection of sequences that together map to a single value, that is, mapping $\{6,5,4,3,2,1,0\}$ to $\{u,u+1,u+2,u+3,u+4,u+5,u+6\}, u\in \mathbb{Z}^+$, would result in a set with only one unique element (and therefore the corresponding gates can then run concurrently one the same $2 \times n$ architecture configuration). Moreover, these sequences need not be strictly consecutive; there may be gaps in them, such as mapping $\{4,2,1,0\}$ to $\{v, v+2, v+3, v+4\}, v\in \mathbb{Z}^+$. We call these sets derived from $\mathcal{S}$ "chains", and discontinuities in their elements "gaps". In other words, a \emph{chain} is a collection of distinct elements that can all be easily mapped to the same value when placed on the ladder (the "ladder" is functionally the same as the "staircase" in the previous section, see Fig.~\ref{fig:ladder_staicase_translation} for further clarification on how to translate between these two notations). When we interpret this back to being a circuit, it will be a collection of gates that will run at the same time step, as they will all have the same $\delta$ after re-indexing.

Therefore, the first step of our algorithm is to find a minimal set of chains, as shown in Algorithm~\ref{alg:making_chains} (\emph{a} minimal set because there may be multiple minimal sets of chains possible for a given input). We do this by initializing an empty set that will become our first chain and then begin moving values from $\mathcal{S}$ to this set, never adding a value that is already present in the chain. When we can no longer add any more values, we create a new empty set to be the second chain and repeat this process. Then we create a third chain and so on until there are no elements left in $\mathcal{S}$.

\begin{algorithm}[H]
\caption{Generation of a Minimal Set of Chains}\label{alg:making_chains}
\begin{algorithmic}[H]
    \Require An array of integers $X$, corresponding to the primitive sets.
    \Ensure A 2D array $Y$, corresponding to the minimal set of chains 
    \State Sort $X$ from largest to smallest\\
    \State Calculate $\text{max\_value}$ to be the largest value in $X$\\
    \State Calculate $\text{num\_chains}$ to be the highest number of times any given element appears in $X$.\\
    
    \State Initialize $Y$ to a 2D array of size $\text{num\_chains} \times \text{max\_value}$, each element is equal to $-1$.
    \For{$x$ in $X$}
    \State $\text{index} \gets$ index($x$) \Comment{get index in $Y$ that corresponds to the value $x$}
    \State $\text{placed} \gets $ False  
    \For{i in $1:\text{num\_chains}$}
    \If{$Y[i][\text{index}] = -1$ AND NOT placed} 
        \State $Y[i][\text{index}] \gets x$
        \State placed $\gets$ True
    \EndIf
    \EndFor
    \EndFor
    \State Depending on implementation, one may have some chains with leading or trailing $-1$'s. If desired, remove those.\\
    \Return $Y$
\end{algorithmic}
\end{algorithm}

Note that this construction will always result in a minimal set of chains with a size equal to the maximum column weight. This is because the most common element in $\mathcal{S}$ will necessitate a new chain for each copy of itself. A minimal set of chains for the Steane code is presented in the right-hand portion of Fig.~\ref{fig:steane_chains}.

Once we have this set of chains, we want to place them in a space of size $|\mathcal{S}|$, making as few cuts to the set of chains as possible, as each chain corresponds to a set of gates that can run in parallel. As we show in Appendix~\ref{sec:NP_proof}, this is NP hard. Therefore, we consider a greedy method where we always place the largest chain (not counting the gaps in the chain) in the first possible position. If a chain does not fit, then we break it into two chains, and return those chains to our set of chains, and continue to place the largest chain remaining in our set. We heuristically break chains with gaps at the gap that is closest to either of the two ends of the chain. If a chain has no gaps, then we break off a single link. The pseudo-code for this is outlined in Algorithm~\ref{alg:sssc}, for an illustration of how this transforms a circuit refer to Fig.~\ref{fig:compilation_comparison}, and an implementation in Julia is available on Github \footnote{\url{https://github.com/amicciche/CircuitCompilation2xn}}~\cite{repo2xn}. The left-hand portion of Fig.~\ref{fig:steane_chains} shows this method resulting in an optimal assignment for the Steane code.

\begin{algorithm}[H]
\caption{Shor Syndrome Specialized Compilation}\label{alg:sssc}
\begin{algorithmic}[H]

\Require A minimal set of chains $Y$, which can be generated via Algorithm~\ref{alg:making_chains}. This is an array which consists of num\_chains rows, which each contains it's own array corresponding to the links and gaps in the chain. -1 denotes a gap in the chain, while other numbers correspond to primitive set values.
\Ensure A sorted list of numbers whose values are the primitive sets, and whose ordering is the order that when $[1,...s]$ is added, results in an array with as ideally as few  distinct values as possible.\\

\State sort($Y$) $\gets$ sorts Y by the total of number of non -1 elements in each chain, from largest to smallest.\\
\State ladder $\gets$ array of size equal to the total number of non -1 elements over all chains in $Y$, which is the size of our original input, $X$, in Algorithm~\ref{alg:making_chains}. Initialize all elements to -1.\\
\State pool $\gets Y$ 
\State sort(pool)
\While{pool is not empty}
\State chain $\gets$ pop from front of pool. \Comment{This should be the chain with the most non -1 values} \\
\State Try to find the first position in ladder such that the non-negative elements in chain line up with -1 elements in ladder. That is, all elements in chain can be assigned sequentially to ladder, starting at the found position, without overwriting any non-negative values.

\If{such a position was found}
    \State Assign the values in chain to ladder starting from that position.
\Else 
    \State Split chain into two smaller chains, chain\_a and chain\_b. If the distance from the beginning of chain to its first -1 is shorter than the distance from the end of chain to its last -1, then split at the first -1, otherwise split at the last -1. (removing the trailing or leading -1). If there are no -1's, simply let chain\_a be the first element, and chain\_b be the rest. 
    \State push chain\_a and chain\_b into pool
    \State sort(pool)
\EndIf
\EndWhile \\
\Return ladder
\end{algorithmic}
\end{algorithm}

 \begin{figure}
    \centering
    \includegraphics[width=\linewidth]{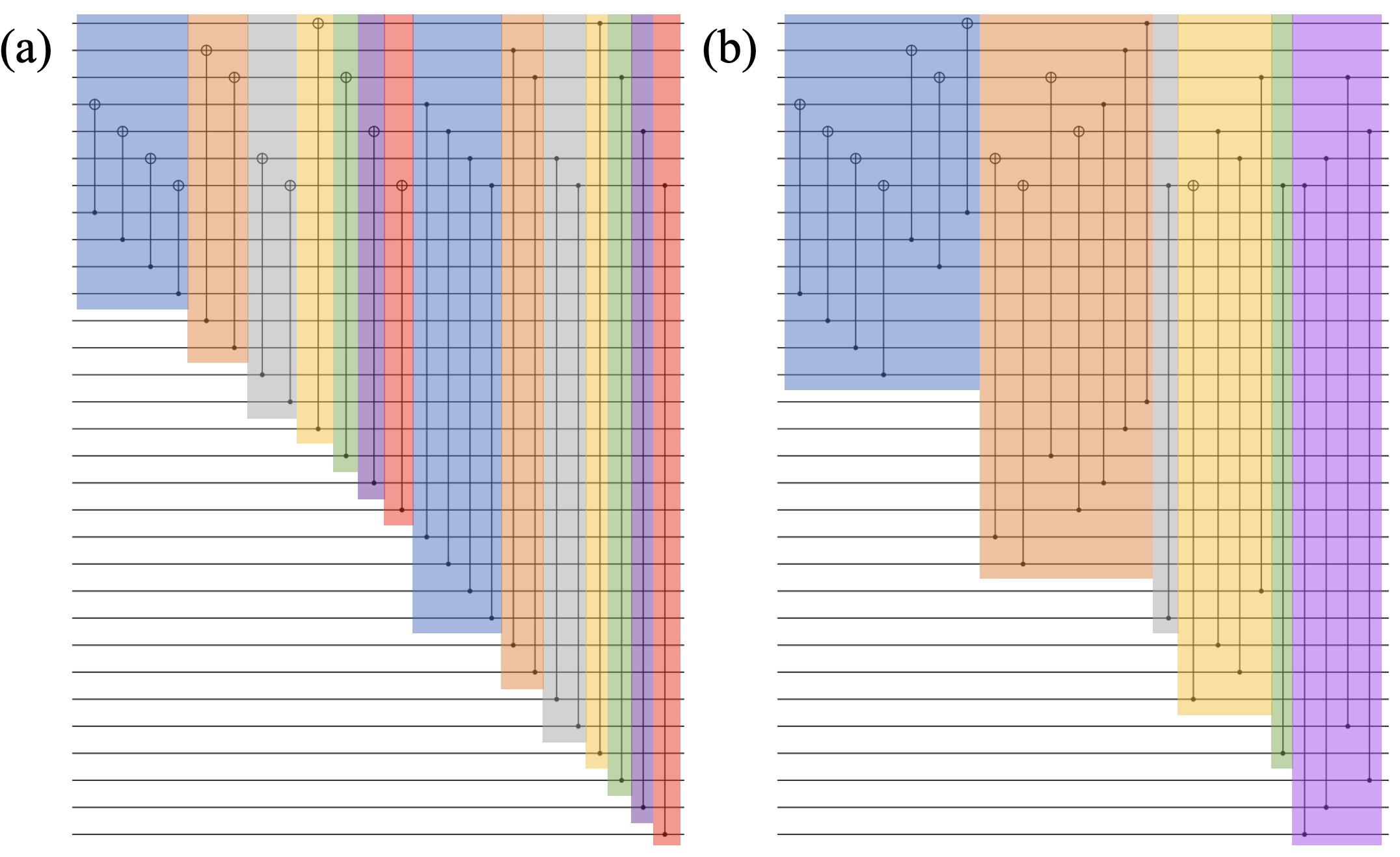}
    \caption{\textbf{Steane code's shor-style syndrome extraction circuit's two qubit gates, both before and after Specialized Shor Syndrome Compilation}. Consecutive gates that are in a color-coded group together can run concurrently on the $2\times n$ architecture (repeated colors have no relation to each other). Furthermore, each grouping corresponds to a particular value of $\delta$. \textbf{a.)} The uncompiled (yet gate shuffled) circuit. There are 14 values for delta in this circuit and therefore 14 shuttles are require to execute this. \textbf{b.)} The circuit after our Specialized Shor Syndrome Compilation algorithm. Notice that now there are now only 6 values of $\delta$, and therefore this circuit will only require 6 shuttles to execute.
}
    \label{fig:compilation_comparison}
\end{figure}

\section{Compilation Results\label{sec:comp_results}}
Tables~\ref{tab:shor_XZ} and ~\ref{tab:naive_XZ} host the number of shuttles required for extracting both the X and Z syndromes of various CSS codes on Shor and naive syndrome circuits respectively. Furthermore, in Appendix.~\ref{sec:sim_results}, we show through gate-by-gate simulation how the logical error rate decreases with compilation on the toric code.

\subsection{Codes Tested}
All codes tested were generated with QuantumClifford.jl \cite{quantumclifford}. Specifically, following the QuantumClifford's ECC API, stablizer tableau were obtained, and then the appropriate methods were called to isolate the X and Z checks, and then to generate the syndrome extracting circuit to be compiled. 
\begin{itemize}
    \item 2-dimensional Toric and Surface codes: $[[18, 2, 3]]$, $[[18, 2, 3]]$,  $[[50, 2, 4]]$ 2D toric codes \cite{Kitaev1997} and $[[13, 1, 3]]$, $[[25, 1, 4]]$, $[[41, 1, 5]]$ 2D surface codes \cite{bravyi1998quantumcodeslatticeboundary} with lattice sizes ranging from $3 \times 3$ to $5 \times 5$ were tested. 
    \item Shor's [[9, 1, 3]] code \cite{shor9}
    \item Steane [[7, 1, 3]] code \cite{steane7}
    \item 2D Color codes: The triangular planar codes with 4.8.8 tiling, given by the instances $[[17,1,5]]$, $[[31,1,7]]$, and $[[71,1,11]]$, and those with 6.6.6 tiling, given by $[[19,1,5]]$, $[[37,1,7]]$, and $[[91,1,11]]$, are considered \cite{landahl2011color}.
    \item Lifted Product (LP) codes: A $[[882,24,d \leq 24]]$ code from Appendix B of \cite{Roffe2023biastailoredquantum}, and a $[[175,19, \leq 10]]$ code Appendix A of \cite{Raveendran2022finiterateqldpcgkp}.
    \item Generalized Bicycle codes: The $[[48,6,8]]$ and $[[254,28, 14 \leq d \leq 20]]$ code instances from Appendix B of \cite{Panteleev2021degeneratequantum}.
    \item Two Block Group Alegbra (2BGA) codes: The $[[40, 8, 5]]$, $[[56, 28, 2]]$, and $[[70, 8, 10]]$ abelian 2BGA codes from Table I of \cite{LinPryadko}, as well as $[[64, 8, 8]]$ non-abelian 2BGA code from Table II of the same work \cite{LinPryadko}.
    \item Multivariate Bicycle code: A weight-$6$ $[[48,4,6]]$ code from \cite{voss2025}.
    \item Bivariate Bicycle codes: The $[[144,12,12]]$, $[[360,12,\leq 24]]$, and $[[756,16,34]]$, as reported in Table~3 of \cite{IBM_gross}.
    \item Trivariate Tricycle codes: The $[[72,6,6]]$, $[[432,12,12]]$, $[[840,9,16]]$ and $[[1029,18,16]]$, as reported in Table~V of \cite{jacob2025single}.
    \item La-Cross codes: The instances $[[100,4,5]]$, $[[117,9,4]]$, $[[296,16,7]]$, and $[[400,16,8]]$ from the $[[(n-k)^2+n^2,k^2,d]]$ La–Cross code family with \textbf{open} boundary conditions, together with the instance $[[98,18,4]]$ from the $[[2n^2,2k^2,d]]$ La–Cross code family with \textbf{periodic} boundary conditions, are reported in Fig.~3 of~\cite{pecorari2025high}.
    \item Quasi-cyclic Generalized Hypergraph Product code: We consider the code instance $[[882,48,16]]$, in which half of the columns have weight~3 and the remaining half have weight~5, as reported in~\cite{Panteleev2021degeneratequantum}. Building on this construction, we generate a new instance with parameters $[[882,12,d]]$, where half of the columns have weight~3 and the other half have weight~7. This construction employs a sparse quasi–cyclic polynomial matrix $A$ over the quotient ring $R=\mathbb{F}_2[x]/(x^\ell-1)$, with each entry being either $0$ or a monomial $x^{i}$ for $0\le i<\ell$, where $\ell$ denotes the circulant size. Each polynomial entry corresponds to a binary $\ell\times\ell$ circulant block. To ensure sparsity, each row and column of $A$ contains at most $w$ nonzero entries, so that the resulting parity–check matrices are $w+\deg b(x)$–limited. A fixed polynomial $b(x)\in R$, referred to as the circulant polynomial, is used to construct the off-diagonal blocks. For the present code instance, we take $\ell=63$ and construct a matrix $A\in M_{n\times n}(R)$ with $n=7$, i.e., $A$ is a $7\times 7$ matrix over $R$:
    \begin{equation*}
\small 
A=\begin{pmatrix}
x^{56} & x^{12} & x^{52} & x^{13} & x^{39} & x^{11} & x^{61} \\
x^{61} & x^{56} & x^{12} & x^{52} & x^{13} & x^{39} & x^{11} \\
x^{11} & x^{61} & x^{56} & x^{12} & x^{52} & x^{13} & x^{39} \\
x^{39} & x^{11} & x^{61} & x^{56} & x^{12} & x^{52} & x^{13} \\
x^{13} & x^{39} & x^{11} & x^{61} & x^{56} & x^{12} & x^{52} \\
x^{52} & x^{13} & x^{39} & x^{11} & x^{61} & x^{56} & x^{12} \\
x^{12} & x^{52} & x^{13} & x^{39} & x^{11} & x^{61} & x^{56}
\end{pmatrix}
b(x)=1+x+x^{6}\in R
\end{equation*}
The code is then defined via the block parity–check matrices
\[
H_X=\begin{bmatrix}A, & b(x)I_m\end{bmatrix},\qquad
H_Z=\begin{bmatrix}b(x)^T I_n, & A^{*}\end{bmatrix}
\]
where $A^{*}$ denotes the transpose of $A$ with each entry polynomial-reversed modulo $x^\ell-1$, and $I_m, I_n$ are identity matrices over $R$ \cite{Panteleev2021degeneratequantum}
\item 3D and 4D Surface codes: The instances of the 3D surface code are $[[12,1,2]]$ for $L=2$ and $[[51,1,3]]$ for $L=3$, while the 4D surface code instances are $[[33,1,4]]$ for $L=2$ and $[[241,1,9]]$ for $L=3$. These codes are constructed via the hypergraph product of $(L-1)\times L$ repetition code chain complexes. The $[[33,1,4]]$ 4D surface code and the construction method for these codes are described in~\cite{PhysRevA.110.062413}.
\item Generalized Toric codes on twisted tori: We consider the code instances $[[292,18,8]]$, $[[392,8,\leq 28]]$, and $[[396,8,\leq 36]]$, as summarized in Tables~III and~IV of \cite{liang2025generalized}.
\item Homological Product codes: We consider the code instances $[[117,9,4]]$ and $[[225,9,6]]$, as reported in Table~III of \cite{xu2025fast}.
\item Double Homological Product codes: We consider the code instances $[[241,1,9]]$ and $[[486,6,9]]$, as reported in Table~I of \cite{Campbell_2019}.
\end{itemize}

\subsection{Shor Compilation\label{sec:shor_comp_description}}
The number of shuttling operations to carry out the fault-tolerant Shor syndrome extracting X and Z circuits for these codes under various levels of compilation is displayed in Table~\ref{tab:shor_XZ}.  The first number in the parenthesis corresponds to the value for the X check syndrome circuit, and the second value corresponds to the Z circuit. Levels of compilation are defined as follows:

\begin{itemize}
    \item \textbf{Uncompiled}: The number of shuttles to run the original circuit provided by QuantumClifford.jl, reading the circuit from left to right, shuttling whenever is needed to run the next gate.
    \item \textbf{GateShuffled}: The number of shuttles required after $\delta$ sorting the circuit as discussed in Section~\ref{sec:gate_shufle}.
    \item \textbf{AHR}: The number of shuttles after reindexing the ancilla by the simple sorting heuristics discussed in Section~\ref{sec:heuristics}.
    \item \textbf{SSSC}: The number of shuttles after reindexing the ancilla by our specialized Shor syndrome circuit algorithm, discussed in Section~\ref{sec:shor}.
    \item \textbf{NumChains \& Blanks}: The "NumChains" value corresponds to the minimum number of shuttles possible, achievable by inserting "blanks" into the ancilla array. The "Blanks" value is then a naive upper bound on the number of blanks required to achieve this. It was calculated by counting the number of gaps in the chains. Chain were introduced in Section~\ref{sec:shor}, and the idea of adding blanks to achieve the minimum number of possible shuttles is discussed further in Section~\ref{sec:ldpc} 
\end{itemize}

In Table~\ref{tab:shor_XZ}, we can see that our techniques find quite good shuttling schedules, trivially so on codes with regular column weight (and so SSSC does not add anything). For codes with non-regular column weight, with only some exception, SSSC does further reduce the number of shuttles compared to our heuristic sorting methods. Furthermore notice that we need not use the same reindexing technique for both the $X$ and $Z$ circuits. For example on the $[[117,9,4]]$ homological product code tested, we could use AHR to compile the X circuit to run in 12 shuttles, while using SSSC to compile the Z circuit to run in 7 shuttles. This is also exemplified in the $[[117,9,4]]$ La-Cross code, which can run with 5 shuttles on the X circuit after compiling via AHR, and then 5 shuttles on the Z circuit after compiling it with SSSC.

Another thing to notice is that since these methods are taking as input a circuit algorithmically generated via its stabilizer tableau, codes like the 2D toric code compile to different shuttle numbers for its X and Z circuits. This is simply due to the stabilizer tableaux for its X and Z checks being literally different due to how they are generated in QuantumClifford.jl \cite{quantumclifford}. The manifestation of the 2D toric code's stabilizer generators form a column-irregular matrix, at least as defined in QuantumClifford.jl, and so optimal compilation is not promised. An interesting avenue for future work might be on techniques that use Gaussian elimination on the stabilizer tableau to work in tandem with our methods which merely permute it.

\subsection{Naive Compilation}
The number of shuttling operations to carry out the naive syndrome extracting circuits for these codes are displayed in Table~\ref{tab:naive_XZ}. The definitions of the types of compilation are the same as for Shor syndrome extraction. Naturally, there are fewer, as we developed some specialized methods for Shor syndrome extraction. For most of the codes tested, AHR does not reduce the number of shuttles for their naive syndrome circuit. This is due to naive compilation being a much harder problem than Shor syndrome compilation, which we show is NP-hard in Appendix~\ref{sec:NP_proof}. The AHR methods are after all heuristics and need not be used if the original ordering of ancilla is better. Development of algorithms to improve performance of more general or other special classes of syndrome extraction circuits (flag qubit circuits are perhaps more interesting than these naive circuits), we leave to future research.

\begin{table*}[]
\begin{tabular}{|l|l|l|l|l|l|l|l|l|l|}
\hline
Code & N & K & D & Uncompiled   & GateShuffled & AHR        & SSSC       & NumChains & Blanks       \\
\hline
2DTORIC                  &   18 &  2 &  3                & (27, 26)     & (21, 21)     & (4, 5)     & (4, 4)     & (2, 2)    & (2, 3)       \\
2DTORIC                  &   32 &  2 &  4                & (49, 48)     & (35, 35)     & (4, 5)     & (4, 5)     & (2, 2)    & (2, 3)       \\
2DTORIC                  &   50 &  2 &  5                & (77, 76)     & (62, 61)     & (4, 5)     & (4, 5)     & (2, 2)    & (2, 3)       \\
2DSURFACE                &   13 &  1 &  3                & (18, 14)     & (14, 13)     & (3, 5)     & (3, 4)     & (2, 2)    & (3, 5)       \\
2DSURFACE                &   25 &  1 &  4                & (36, 30)     & (26, 25)     & (3, 6)     & (3, 4)     & (2, 2)    & (4, 7)       \\
2DSURFACE                &   41 &  1 &  5                & (60, 52)     & (48, 34)     & (3, 7)     & (3, 5)     & (2, 2)    & (5, 9)       \\
SHOR                     &    9 &  1 &  3                & (2,  4)      & (2, 4)       & (2, 4)     & (2, 4)     & (2, 2)    & (0, 4)       \\
STEANE                   &    7 &  1 &  3                & (7, 7)       & (7, 7)       & (3, 3)     & (3, 3)     & (3, 3)    & (1, 1)       \\
2DCOLOR4.8.8             &   17 &  1 &  5                & (19, 19)     & (15, 15)     & (7, 7)     & (4, 4)     & (3, 3)    & (7, 7)       \\
2DCOLOR4.8.8             &   31 &  1 &  7                & (38, 38)     &  (27, 27)    & (9, 9)     & (7, 7)     & (3, 3)    & (10, 10)     \\
2DCOLOR4.8.8             &   71 &  1 & 11                & (94, 94)     & (67, 67)     & (13, 13)   & (9, 9)     & (3, 3)    & (18, 18)     \\
2DCOLOR6.6.6             &   19 &  1 &  5                & (22, 22)     & (17, 17)     & (7, 7)     & (6, 6)     & (3, 3)    & (5, 5)       \\
2DCOLOR6.6.6             &   37 &  1 &  7                & (48, 48)     & (32, 32)     & (10, 10)   & (7, 7)     & (3, 3)    & (9, 9)       \\
2DCOLOR6.6.6             &   91 &  1 & 11                & (127, 127)   & (82, 82)     & (16, 16)   & (12, 12)   & (3, 3)    & (17, 17)     \\
LIFTEDPRODUCT            &  882 & 24 & $ d \leq$ 24      & (2212, 2212) & (1820, 1820) & (3, 3)     & (3, 3)     & (3, 3)    & (0, 0)       \\
LIFTEDPRODUCT            &  175 & 19 & $ d \leq$ 10      & (582, 584)   & (395, 377)   & (4, 4)     & (4, 4)     & (4, 4)    & (0, 0)       \\
GENERALIZEDBICYCLE       &   48 &  6 &  8                & (192, 192)   & (139, 139)   & (4, 4)     & (4, 4)     & (4, 4)    & (0, 0)       \\
GENERALIZEDBICYCLE       &  254 & 28 & $14 \le d \le 20$ & (1144, 1144) & (827, 827)   & (5, 5)     & (5, 5)     & (5, 5)    & (0, 0)       \\
TWOBLOCKGROUPALGEBRA     &   70 &  8 & 10                & (212, 212)   & (145, 145)   & (4, 4)     & (4, 4)     & (4, 4)    & (0, 0)       \\
TWOBLOCKGROUPALGEBRA     &   40 &  8 &  5                & (121, 121)   & (74, 74)     & (6, 6)     & (6, 6)     & (6, 6)    & (0, 0)       \\
TWOBLOCKGROUPALGEBRA     &   56 & 28 &  2                & (142, 142)   & (136, 136)   & (6, 6)     & (6, 6)     & (6, 6)    & (0, 0)       \\
TWOBLOCKGROUPALGEBRA     &   64 & 8 &  8                 & (224, 224)   & (159, 168)   & (6, 6)     & (6, 6)     & (6, 6)    & (0, 0)        \\
MULTIVARIATEBICYCLE      &   48 &  4 &  6                & (144, 144)   & (93, 93)     & (4, 4)     & (4, 4)     & (4, 4)    & (0, 0)       \\
LA-CROSS                 &  100 &  4 &  5                & (192, 156)   & (135, 99)    & (5, 19)    & (12, 15)   & (3, 3)    & (24, 45)     \\
LA-CROSS                 &   98 & 18 &  4                & (252, 252)   & (201, 201)   & (3, 3)     & (3, 3)     & (3, 3)    & (0, 0)       \\
LA-CROSS                 &  117 &  9 &  4                & (240, 216)   & (180, 160)   & (5, 21)    & (12, 5)    & (3, 3)    & (45, 77)     \\
LA-CROSS                 &  296 & 16 &  7                & (630, 580)   & (461, 458)   & (5, 31)    & (46, 27)   & (3, 3)    & (98, 163)    \\
LA-CROSS                 &  400 & 16 &  8                & (876, 816)   & (618, 605)   & (5, 35)    & (52, 39)   & (3, 3)    & (112, 187)   \\
BIVARIATEBICYCLE         &  144 & 12 & 12                & (372, 372)   & (229, 229)   & (3, 3)     & (3, 3)     & (3, 3)    & (0, 0)       \\
BIVARIATEBICYCLE         &  360 & 12 & $d \leq$24        & (930, 930)   & (549, 549)   & (3, 3)     & (3, 3)     & (3, 3)    & (0, 0)       \\
BIVARIATEBICYCLE         &  756 & 16 & $d \leq$34        & (2268, 2268) & (1442, 1442) & (3, 3)     & (3, 3)     & (3, 3)    & (0, 0)       \\
TRIVARIATETRICYCLE       &   72 &  6 &  6                & (207, 415)   & (148, 315)   & (3, 6)     & (3, 6)     &  (3, 6)   & (0, 0)      \\
TRIVARIATETRICYCLE       &  432 & 12 & 12                & (1266, 2532) & (838, 1722)  & (3, 6)     & (3, 6)     &  (3, 6)   &  (0, 0)     \\
TRIVARIATETRICYCLE       &  840 &  9 & 16                & (2520, 5040) & (1642, 3214) & (3, 6)     & (3, 6)     &  (3, 6)   &  (0, 0)     \\
TRIVARIATETRICYCLE       & 1029 & 18 & 16                & (3087, 6174) & (1890, 4429) & (3, 6)     & (3, 6)     & (3, 6)    &  (0, 0)     \\
3DSURFACE                &   12 &  1 &  2                & (9, 23)      & (9, 16)      & (3, 4)     & (3, 4)     & (2, 3)    &  (4, 4)     \\
3DSURFACE                &   51 &  1 &  3                & (66, 138)    & (46, 90)     & (13, 10)   & (10, 8)    &  (2, 4)   & (17, 31)    \\
4DSURFACE                &   33 &  1 &  4                & (64, 75)     & (49, 54)     & (8, 8)     & (7, 7)     & (4, 4)    & (16, 16)    \\
4DSURFACE                &  241 &  1 &  9                & (656, 715)   & (419, 461)   & (67, 37)   & (44, 27)   & (4, 4)    &  (152, 196) \\
GENERALIZEDTORIC         &  292 & 18 &  8                & (730, 730)   & (504, 504)   & (3, 3)     & (3, 3)     & (3, 3)    & (0, 0)        \\
GENERALIZEDTORIC         &  392 &  6 & $d \leq$28        & (280, 280)   & (196, 196)   & (3, 3)     & (3, 3)     & (3, 3)    & (0, 0)        \\
GENERALIZEDTORIC         &  396 &  8 & $d \leq$36        & (991, 991)   & (583, 583)   & (3, 3)     & (3, 3)     & (3, 3)    & (0, 0)       \\
HOMOLOGICALPRODUCT       &  117 &  9 &  4                & (225, 225)   & (155, 162)   & (12, 8)    & (22,7)     & (3, 3)    & (27, 42)       \\
HOMOLOGICALPRODUCT       &  225 &  9 &  6                & (618, 630)   & (432, 447)   & (27, 14)   & (28, 11)   & (4, 4)    & (72, 120)         \\
DBLEHOMOLOGICALPRODUCT   &  486 &  6 &  9                & (1782, 1782) & (1204, 1204) &  (4, 4)    & (4, 4)     &  (4, 4)   &  (0, 0)            \\
QUASICYCLIC\_GHP         &  882 & 48 & 16                & (3093, 3093) & (2151, 2151) &  (5, 5)    &  (5, 5)    &  (5, 5)   &  (0, 0)     \\
QUASICYCLIC\_GHP         &  882 & 12 &                   & (3976, 3976) & (2829, 2829) &  (7, 7)    &  (7, 7)    &   (7, 7)  &  (0, 0)      \\
\hline
\end{tabular}
\caption{\textbf{Compilation numbers for Shor-style syndrome circuits for extracting X, Z stabilizers separately on CSS codes.}  
As described in Section.~\ref{sec:shor_comp_description}, the different columns correspond to the required numbers of shuttles corresponding to different levels of compilation (the number of shuttles is listed for $X$ circuits first and then $Z$). However "NumChains" corresponds to the smallest number of shuttles possible, obtainable by introducing "blanks" into the ancilla array. A very naive upper bound on the number of blanks required for perfect compilation is presented in the "Blanks" column, rather than a number of shuttles. Refer to Section~\ref{sec:ldpc} for more on blanks and perfect compilation. Quantum error correcting code parameters: \textbf{N} is the number of physical qubits, \textbf{K} the number of logical qubits, and true code distance $\textbf{D} = \min(d_X, d_Z)$ is defined as the minimum weight of a non-trivial logical Pauli operator. Distances were calculated using the mixed-integer programming method of \cite{landahl2011color}.}
 \label{tab:shor_XZ}
\end{table*}

\begin{table*}[]
\begin{tabular}{|l|l|l|l|l|l|l|}
\hline
Code       & N & K & D                                     & Uncompiled   & GateShuffled & AHR  \\
\hline
2DTORIC                 &   18 &  2 &  3                    & (32, 32)     & (6, 6)       & (11, 10)   \\
2DTORIC                 &   32 &  2 &  4                    & (60, 60)     & (6, 6)       & (12, 11)   \\
2DTORIC                 &   50 &  2 &  5                    & (96, 96)     & (6, 6)       & (12, 11)   \\
2DSURFACE               &   13 &  1 &  3                    & (20, 20)     & (5, 6)       & (5, 6)     \\
2DSURFACE               &   25 &  1 &  4                    & (42, 42)     & (6, 7)       & (6, 7)     \\
2DSURFACE               &   41 &  1 &  5                    & (72, 72)     & (7, 8)       & (7, 8)     \\
SHOR                    &    9 &  1 &  3                    & (12, 10)     & (8, 4)       & (8, 4)     \\
STEANE                  &    7 &  1 &  3                    & (12, 12)     & (8, 8)       & (7, 7)     \\
2DCOLOR4.8.8            &   17 &  1 &  5                    & (36, 36)     & (13, 13)     & (13, 13)   \\
2DCOLOR4.8.8            &   31 &  1 &  7                    & (72, 72)     & (21, 21)     & (20, 20)   \\
2DCOLOR4.8.8            &   71 &  1 & 11                    & (180, 180)   & (43, 43)     & (38, 38)   \\
2DCOLOR6.6.6            &   19 &  1 &  5                    & (42, 42)     & (13, 13)     & (11, 11)   \\
2DCOLOR6.6.6            &   37 &  1 &  7                    & (90, 90)     & (23, 23)     & (20, 20)   \\
2DCOLOR6.6.6            &   91 &  1 & 11                    & (240, 240)   & (52, 52)     & (47, 47)   \\
LIFTEDPRODUCT           &  882 & 24 & d $\leq$ 24           & (2646, 2646) & (13, 13)     & (81, 32)   \\
LIFTEDPRODUCT           &  175 & 19 & d $\leq$ 10           & (588, 588)   & (46, 53)     & (103, 55)  \\
GENERALIZEDBICYCLE      &   48 &  6 & 8                     & (192, 192)   & (13, 13)     & (38, 36)   \\
GENERALIZEDBICYCLE      &  254 & 28 & 14 $\leq$ d $\leq$ 20 & (1270, 1270) & (18, 18)     & (67, 67)   \\
TWOBLOCKGROUPALGEBRA    &   70 &  8 & 10                    & (280, 280)   & (14, 14)     & (42, 40)   \\
TWOBLOCKGROUPALGEBRA    &   40 &  8 &  5                    & (160, 160)   & (17, 17)     & (41, 31)   \\
TWOBLOCKGROUPALGEBRA    &   56 & 28 &  2                    & (224, 224)   & (15, 15)     & (24, 15)   \\
TWOBLOCKGROUPALGEBRA    &   64 & 8  &  8                    & (256, 256)   & (55, 55)     & (67, 66)   \\
MULTIVARIATEBICYCLE     &   48 &  4 &  6                    & (144, 144)   & (16, 16)     & (35, 23)   \\
LA-CROSS                &  100 &  4 &  5                    & (252, 252)   & (16, 20)     & (16, 20)   \\
LA-CROSS                &   98 & 18 &  4                    & (294, 294)   & (10, 10)     & (37, 28)   \\
LA-CROSS                &  117 &  9 &  4                    & (270, 270)   & (16, 22)     & (16, 22)   \\
LA-CROSS                &  296 & 16 &  7                    & (720, 720)   & (24, 32)     & (24, 32)   \\
LA-CROSS                &  400 & 16 &  8                    & (1008, 1008) & (28, 36)     & (28, 36)    \\
BIVARIATEBICYCLE        &  144 & 12 & 12                    & (432, 432)   & (12, 12)     & (21, 68)            \\
BIVARIATEBICYCLE        &  360 & 12 & d $\leq$ 24           & (1080, 1080) & (12, 12)     & (43, 186)           \\
BIVARIATEBICYCLE        &  756 & 16 & d $\leq$ 34           & (2268, 2268) & (19, 19)     & (39, 58)        \\
TRIVARIATETRICYCLE      &   72 &  6 &  6                    & (216, 432)   & (28, 53)     & (55, 73)          \\
TRIVARIATETRICYCLE      &  432 & 12 & 12                    & (1296, 2592) & (46, 90)     & (86, 217)           \\
TRIVARIATETRICYCLE      &  840 &  9 & 16                    & (2520, 5040) & (43, 84)     & (98, 254)                \\
TRIVARIATETRICYCLE      & 1029 & 18 & 16                    & (3087, 6174) & (39,76)      & (107, 390)            \\
3DSURFACE               &   12 &  1 &  2                    & (16, 27)     & (10, 9)      & (9, 9)          \\
3DSURFACE               &   51 &  1 &  3                    & (84, 150)    & (16, 22)     & (16, 33)         \\
4DSURFACE               &   33 &  1 &  4                    & (83, 84)     & (23, 18)     & (23, 18)        \\
4DSURFACE               &  241 &  1 &  9                    & (758, 759)   & (72, 57)     & (79, 83)          \\
GENERALIZEDTORIC        &  292 & 18 &  8                    & (876, 876)   & (230, 230)   & (189, 201)           \\
GENERALIZEDTORIC        &  392 &  8 & $d \leq$28            & (336, 336)   & (95, 95)     & (95, 93)              \\
GENERALIZEDTORIC        &  396 &  8 & $d \leq$36            & (1188, 1188) & (312, 312)   & (251, 243)                   \\
HOMOLOGICALPRODUCT      &  117 &  9 &  4                    & (225, 225)   & (28, 22)     & (43, 52)              \\
HOMOLOGICALPRODUCT      &  225 &  9 &  6                    & (630, 630)   & (38, 32)     & (49, 114)                \\
DBLEHOMOLOGICALPRODUCT  &  486 &  6 &  9                    & (1944, 1944) & (83, 83)     & (259, 91)                   \\
QUASICYCLIC\_GHP        &  882 & 48 & 16                    & (3528, 3528) & (19, 19)     & (102, 49)         \\    
QUASICYCLIC\_GHP        &  882 & 12 &                       & (4410, 4410) & (31, 31)     & (145, 126)                       \\
\hline
\end{tabular}
\caption{\textbf{Different compilations of the naive syndrome circuit for extracting all stabilizers.} Same as Table~\ref{tab:shor_XZ}, except for naive syndrome circuits. Naturally there are fewer columns, as we developed additional methods specifically for Shor-syndrome extraction.}
\label{tab:naive_XZ}
\end{table*}

\section{Optimal Compilation of regular LDPC codes\label{sec:ldpc}}
So far we have discussed techniques for general compilation, the special case of Shor-syndrome extraction, as well as simulation of how these techniques can lower pseudo-thresholds. In this section, we illuminate a family of LDPC codes that can always be reordered in polynomial time such that the number of resulting shuttles in its Shor syndrome extraction circuit is perfectly minimized.  

First we notice how a stabilizer tableau or parity check matrix maps to the input $\mathcal{S}$ of our ancillary qubit re-indexing problem. For example, take the Steane code (treating the X and Z checks together to create a more pedagogical example):
\begin{equation}
    \begin{bmatrix}
        + & & & & X & X & X & X\\
        + & & X & X & & & X & X\\
        + & X & & X & & X & & X\\
        + & & & & Z & Z & Z & Z\\
        + & & Z& Z & & & Z & Z\\
        + & Z & & Z & & Z & & Z
    \end{bmatrix}
    \rightarrow
    \begin{bmatrix}
        & & & 3 & 2 & 1 & 0\\
        & 5 & 4 & & & 1 & 0\\
        6 & & 4 & & 2 & & 0\\
        & & & 3 & 2 & 1 & 0\\
        & 5& 4 & & & 1 & 0\\
        6 & & 4 & & 2 & & 0
    \end{bmatrix}
\end{equation}

If we were to build a syndrome extracting circuit of this code, we would take each row of the Steane code, attributing the first column to the first data qubit (the top of the circuit diagram), the second column to the second qubit on our diagram and so on, then attach gates to ancilla wherever we find a non-zero entry. For example, the first row would cause four gates to be connected to the fourth, fifth, sixth, and seventh data qubits respectively. When we apply our ancillary qubit re-indexing techniques to this, the ancillary qubits connected to the bottom (seventh) data qubit would have a value of 0 in their primitive set, ancilla connected to the sixth data qubit would have a 1 in their primitive set and so on. Thus, in the case of Shor-style syndrome extraction, our set $\mathcal{S}$ will be comprised of the entries in the right matrix in (3) (while in naive syndrome circuits, the primitive sets will correspond to the rows of the right side of that same matrix).

We now notice that if our error correction code has regular column weight $w_c$ (each column has the same number of non-zero entries), then this would mean that all elements that occur in $\mathcal{S}$, occur the same number of times. Because each element occurs $w_c$ times, we can then make $w_c$ identical chains of the form $n-1, n-2, ..., 0$, each of which has no gaps, and will pack perfectly. Therefore, all LDPC codes with regular column weight can be compiled to run on $2 \times n$ architectures in exactly $w_c$ shuttles. It is worth noting that this would then include all "regular" LDPC codes, as they have constant column and row weight~\cite{eczoo_regular_ldpc}. Furthermore, the $[[144,12,12]]$ gross code ~\cite{IBM_gross} also has regular column weight and would be an excellent choice of error correcting code for this architecture, as well as trivariate tricycle codes \cite{jacob2025single}, generalized toric codes on twisted tori \cite{liang2025generalized}, lifted product codes and its' subfamilies including generalized bicycle and two block group algebra codes \cite{Panteleev2021degeneratequantum}. In addition, we observed some irregular codes that also compiled perfectly such as $[[882, 48, 16]]$, and $[[882, 12, ?]]$ quasi-cyclic generalized hypergraph product codes \cite{Panteleev2021degeneratequantum}.

\subsection{Optimal Compilation on Irregular LDPC codes\label{sec:blank_qubits}}

Until now, we have viewed the problem of re-indexing the ancillary qubits under Shor-style syndrome extraction as a problem that implicitly involves maintaining the tightest possible spacing of the qubits. If we relax this assumption and allow for spaces in the array, then we can trivially achieve optimal compilation in regard to the number of required shuttles by deriving the set of minimal chains, and placing those consecutively on an "extended" staircase to accommodate the gaps in some of the chains. The number of additional quantum dots required to do this will, in the most naive placement scheme, be equal to the sum of the sizes of all gaps in the set of minimal chains. Before, we were maintaining optimal spacing, while trying to minimize shuttles, however one could instead aim to fix the number of shuttles to optimal (equal to the maximum column weight of the code), while minimizing the spacing. We prove the former to be NP hard in Appendix\ref{sec:NP_proof}, while we conjecture the latter to also be NP-hard.

\section{Discussion and Conclusion\label{sec:the_end}}
We have demonstrated methods for scheduling the shuttles of qubits in $2 \times n$ one-dimensional architectures, highlighting a large family of LDPC codes that under Shor-syndrome extraction only ever requires a number of shuttles exactly equal to the code's column weight, which notably includes bivariate bicycle quantum LDPC codes, which were shown in recent work to have a threshold on par with the surface code~\cite{IBM_gross}. Other codes that would make great candidates for this architecture would include: trivariate tricycle codes \cite{jacob2025single}, generalized toric codes on twisted tori \cite{liang2025generalized}, lifted product codes and its' subfamilies including generalized bicycle and two block group algebra codes 
\cite{Panteleev2021degeneratequantum}, as already mentioned in Section~\ref{sec:ldpc}.


 The most natural next problem to address would be the state preparation of "interwoven" cat states needed to perform Shor syndrome extraction. "Interwoven" corresponds to after ancilla reindexing, we might require the ancilla array contain distinct multiple cat states that need to be arranged in the ancilla array such that the physical qubits constituting one of the cat states are not all neighbors to one another (see Fig.~\ref{fig:cat_prep}). Throughout this work, we assumed that we had access to fault-tolerantly prepared cat states to load into the ancilla array, however, the details of how to do and optimize this, especially the process of interweaving, we leave to future work. In Appendix~\ref{sec:cat_prep}, we propose a sketch for how one could start with an array of consecutive cat states and interweave them, using an additional empty quantum dot array.
 
For other future work, we plan to investigate the design of more elegant algorithms for compilation, perhaps exploring the trade-off between minimizing the size of the ancilla array and minimizing the number of shuttles. We \editcolor{also hope to} consider other paradigms and pieces of the quantum error correction pipeline possible in the specific hardware implementation; for instance, incorporating flag qubits, architectures with more than two rails, how to optimize performance of logical operations, and discovery of other family codes that can have their syndrome extracted in ways that require only a few shuttling operations.

Nevertheless, what we have presented here is already a substantial achievement towards embarking on what we believe to be a rich area of research, both from the perspectives of theoretical computer science as well as practical quantum error correction. That is, the rather novel algorithmic questions raised by optimizing error correcting codes to run on practical hardware are challenging enough to be interesting, yet seemingly constrained enough to allow for progress to be made. The culmination of which is the practical result of having more efficient quantum error correction schemes specifically tailored to a specific and promising type of quantum hardware. We hope to see this work expanded upon, as well as other works that explore these sorts of optimizations for other qubit platforms as we continue to approach a future of fault-tolerant quantum computation. \\

\paragraph*{Code availability:} The underlying simulator is available as a registered open source package in the Julia ecosystem: CircuitCompilation2xn.jl~\cite{repo2xn}. The goal of that package is to allow users to provide any stabilizer code's tableau, or any syndrome extraction circuit, and have the package compile it for better shuttling performance on the $2 \times n$ architecture. 

\begin{acknowledgements}
We would like to thank our colleagues Filip Rozepedek and Guus Avis for their fruitful discussions regarding gate commutativity. We also thank Nithin Raveendran for his recommendation to see how bivariate bicycle codes compile on this architecture. We would like to acknowledge colleagues Ferdinand Kuemmeth and Fabrizio Berritta as well. This work was supported by funding from the Inge Lehmann Programme of the Independent Research Fund Denmark, the US Army Research Office (ARO) under Award No. W911NF-24-2-0043, and the NSF under grants 1941583, 2346089, and 2402861.
\end{acknowledgements}

\bibliography{bibliography}


\appendix

\section*{Appendix}

In this appendix, we provide examples of the block stacking visualization on valid error correcting codes, an illustration of how to translate between the "ladder" and "staircase" notations, a proof for the NP hardness of the ancillary qubit re-indexing problem, a discussion of how to prepare the necessary interwoven cat states needed for compiled Shor syndrome circuits, and a presentation of plotting logical versus physical error rates for different compilation schemes, following a circuit-level noise model.

\subsection{More Block Visualization Examples\label{sec:more_block_examples}}
Refer to Figures~\ref{fig:steane_staircase} and~\ref{fig:shor_staircase} to see how the simple heuristics from Section~\ref{sec:heuristics} can be applied to the naive syndrome extracting circuits of Steane's [[7,1,3]] code~\cite{steane7}  and Shor's [[9,1,3]] code~\cite{shor9}. In Section~\ref{sec:shor}, we introduced a condensed version of the staircase notation that is specifically for Shor syndrome circuits. Fig.~\ref{fig:ladder_staicase_translation} illustrates how these two notations are equivalent.

\begin{figure}[h]
    \centering
    \includegraphics[width=\linewidth]{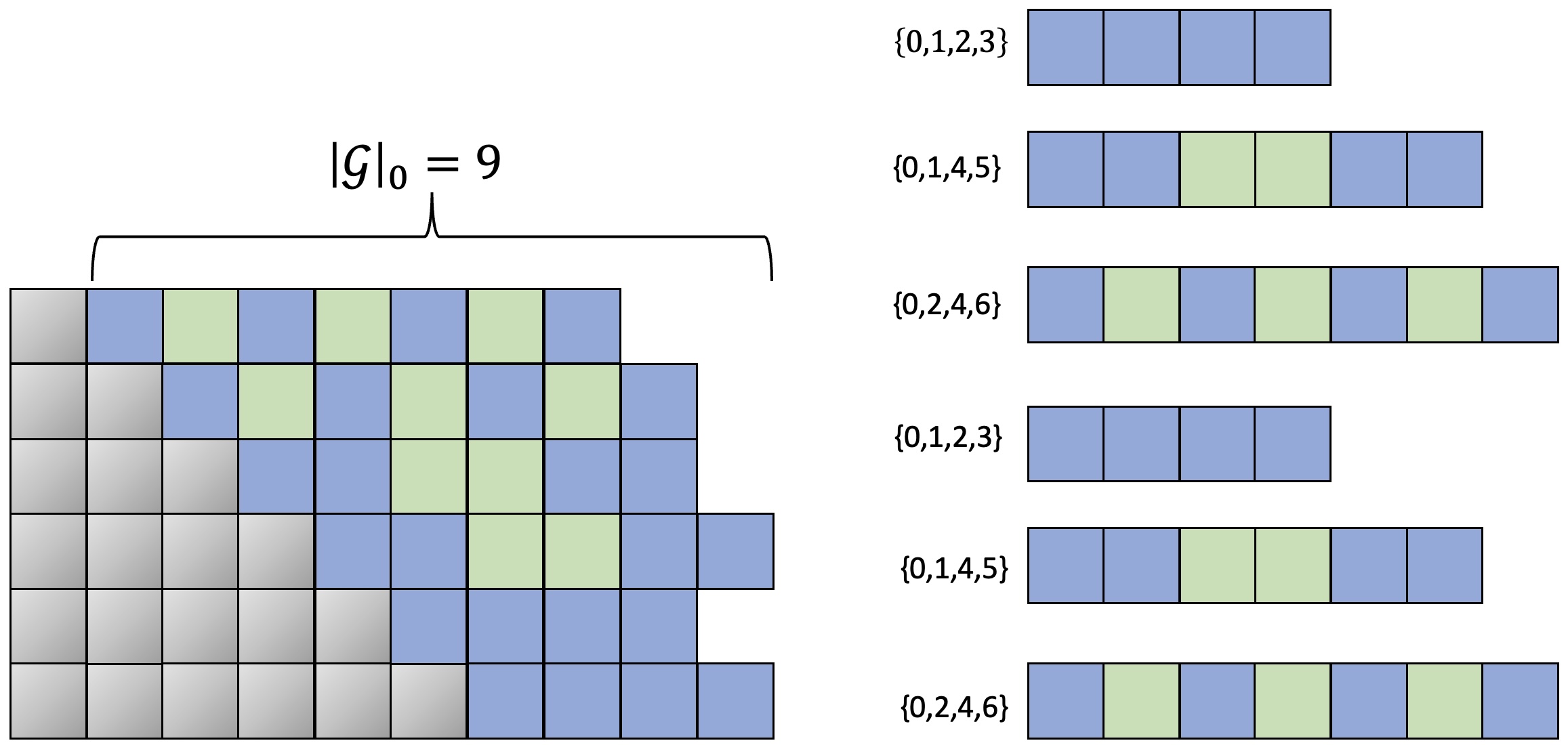}
    \caption{\textbf{Application of heuristics on the Steane code}. Here we have the primitive sets of the naive syndrome extraction circuit of the Steane code expressed in block form. We can see that either of the first two heuristics discussed in Section~\ref{sec:heuristics} will allow use to find an optimal solution for this case of ancillary qubit re-indexing. Note: As this example is pedagogical in nature, this corresponds to reindexing the X and Z syndrome circuits together, something that in practice will be avoided.
}
    \label{fig:steane_staircase}
\end{figure}

\begin{figure}
    \centering
    \includegraphics[width=\linewidth]{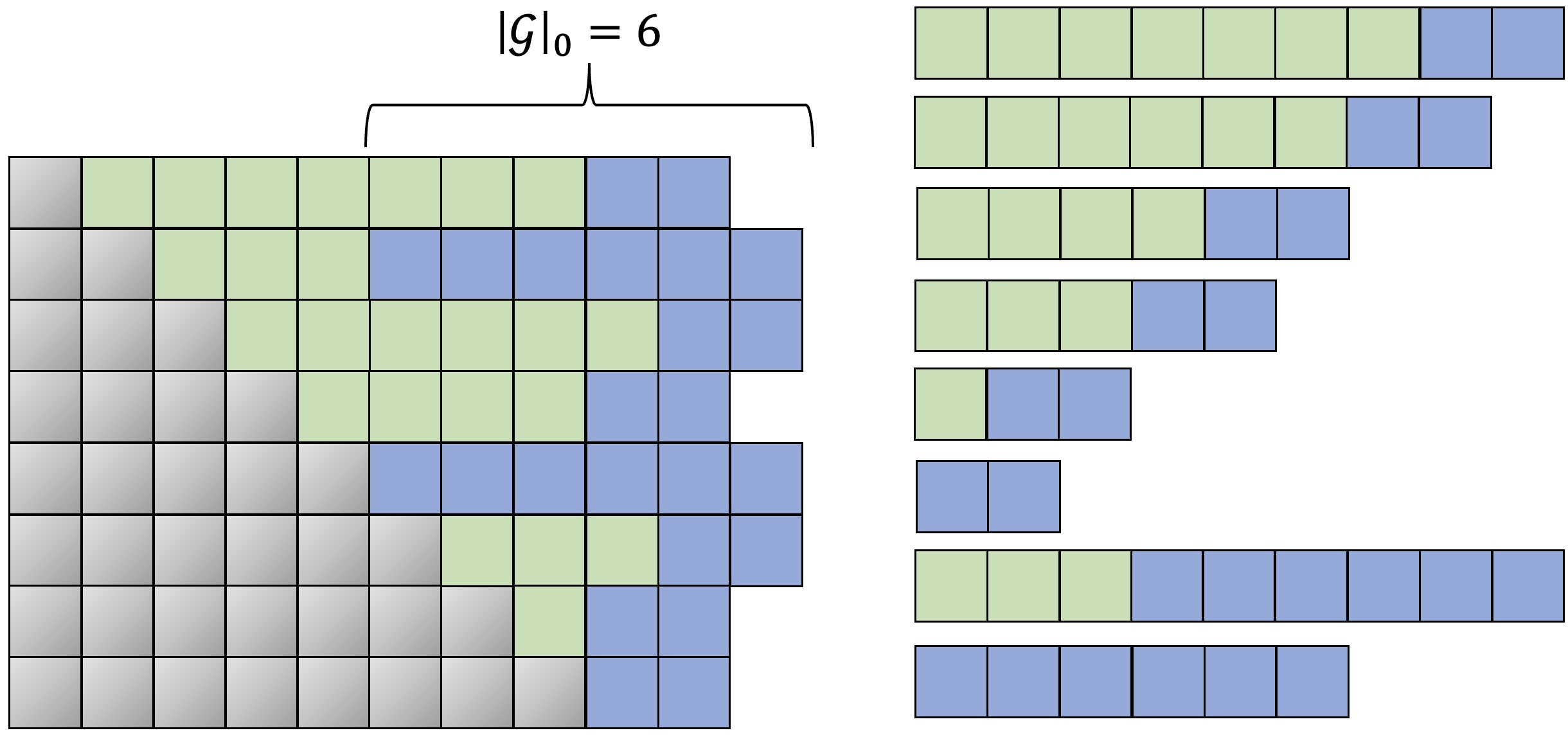}
    \caption{\textbf{Application of heuristics on the Shor code}. Here we have the primitive sets of the naive syndrome extraction circuit of the Shor code expressed in block form. We can see that either of the first two heuristics discussed in Section~\ref{sec:heuristics} will allow use to find an optimal solution for this case of ancillary qubit re-indexing. Note: As this example is pedagogical in nature, this corresponds to reindexing the X and Z syndrome circuits together, something that in practice will be avoided.
}
    \label{fig:shor_staircase}
\end{figure} 

\begin{figure}
    \centering
    \includegraphics[width=\linewidth]{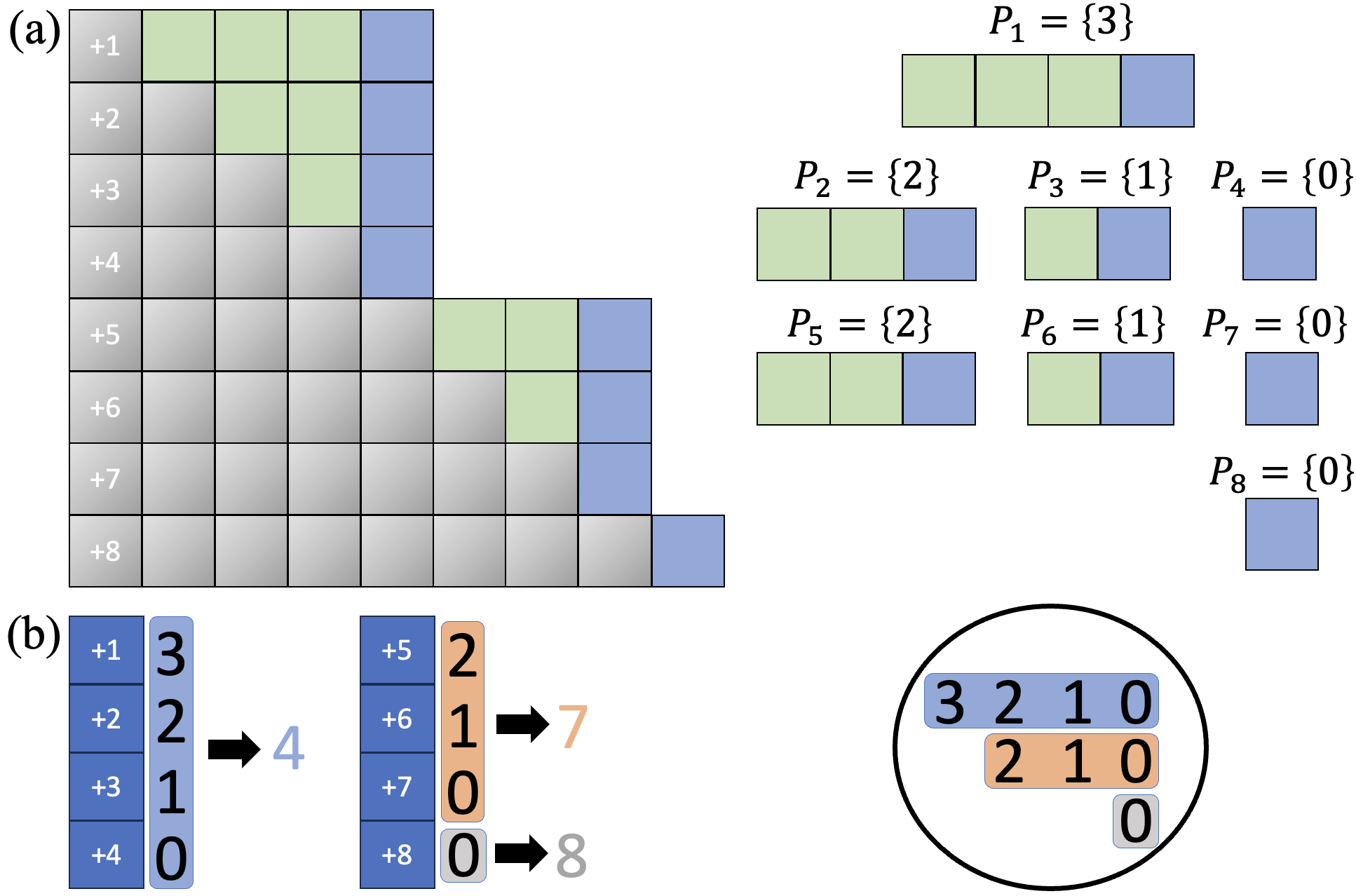}
    \caption{\textbf{Translation between "staircase" and "ladder" notation}. \textbf{a)} The "staircase" packing problem for some arbitrary circuit. If all primitive sets are of size one, we can condense each chain of blocks to a single number denoting where the non-air block is, and similarly we can represent each stair in the staircase as a single number. \textbf{b)} The condensed notation for the same problem. We sometimes refer to this notation as a "ladder" in this work. Numbers summing to same value in the ladder notation corresponds to blocks occupying the same column in the staircase notation. For problems where the ladder notation is applicable (primitive sets of size one), both the ladder and staircase notation are equivalent, however the ladder notation is considerably more concise.
}
    \label{fig:ladder_staicase_translation}
\end{figure} 

\subsection{Ancillary Qubit Re-indexing is NP-hard\label{sec:NP_proof}}
We prove here that the assignment problem we devised heuristics for in Section~\ref{sec:heuristics} and Section~\ref{sec:shor} is in fact NP-hard. We do so by first noting that if we could solve the general problem of naive-syndrome re-indexing defined in Section~\ref{sec:heuristics}, then we could also solve the special case of Shor-syndrome re-indexing, as Shor-syndrome re-indexing is simply the case where all primitive sets are of size 1. Therefore the complexity of the general problem is at least as hard as the Shor-syndrome re-indexing problem. With this in mind, we prove that Shor re-indexing is NP-hard. First let's recall the problem definition from Sec~\ref{sec:shor}: \\
 
\textbf{Shor Re-indexing Problem Formulation}: Given a multi-set of $s$ integers $\mathcal{S} = \{p_1,p_2,...,p_s\}$, find a bijection $\pi: [s] \rightarrow [s]$, such that $|\mathcal{G}|_0 $ is minimized, which can be written as follows: 
\begin{equation}
    \min_\pi |\mathcal{G}|_0  = \min_\pi \Bigl| \bigcup_{i=1}^s \{p_i+ \pi(i) \}\Bigr|
\end{equation}

We would like to prove that given an $\mathcal{S}$, with its most common element occurring $t$ times, determining whether or not there exists a bijection $\pi$, such that $\Bigl| \bigcup_i \{p_i+ \pi(i) \}\Bigr| =t+1$, to be NP-hard. 

We prove this by reduction from 3-partition. In the 3-partition problem, we are given $3m$ positive integers as a multiset $A=\{a_1,...,a_{3m}\}$. The problem is to see if we can partition these into $m$ triples (multisets of size 3), such that each set sums to the same value $T$, and we may assume that $T/4 < a_i < T/2$~\cite{3_partition}. This is an NP-complete problem so if we can show that any algorithm that could solve our problem generally would also be able to solve arbitrary instances of 3-partition under a polynomial time mapping to our problem, then we know that our problem must be at least as hard as 3-partition, and therefore NP-hard. 

For this reduction, we take an arbitrary 3 partition problem and map it to our Shor re-indexing problem  by mapping the elements of $A$ to a multiset $\mathcal{S}$, where $\mathcal{S}$ is the input to our problem as defined above. This $\mathcal{S}$ can then be decomposed into no fewer than $3m+1$ sets, each corresponding to a chain $c_i$, where $i\in \{0,1,...,3m\}$. Rather than specifying each element in $\mathcal{S}$, we instead define each $c_i$ below, noting that $\mathcal{S}= \sum_{i=0}^{3m}c_i$ (addition of multisets here is defined as combining them, e.g.\ $\{1,2\}+\{2,3\}=\{1,2,2,3\}$.) Later we prove that with how our $\mathcal{S}$ is defined, no other set of chains would be feasible in obtaining a solution to the reindexing problem, thus excluding the possibility of a solution to Shor re-indexing existing that does not also map to a valid solution of the given 3-partition problem. We define $c_0,c_1,...,c_{3m}$ as follows:

\begin{itemize}
    \item First, let:
    \begin{align}
        a^* = \max_{A} - \min_{A} < \frac{T}{2} - \frac{T}{4} \leq \frac{T}{4}
    \end{align}

    We use this to define $c_1, ..., c_{3m}$, noting that all of these chains each correspond to an original element in $A$:
    \begin{align}
        c_i = \{0,1,..., a^* + a_i-1\}, \forall i \in \{1,...,3m\}
    \end{align}
    
    Each chain is length $a+a^*$, corresponding to each $a \in A$, and each chain has at least the first $a^*$ elements in common. 
   
    \item $c_0$ is defined as follows:
    \begin{align}
        F(i,j) = i(T+3a^*)+j+a^*(i-1)+(|S|+T) \\
        c_0  = \left\{ (|S|+T), F(i,j) | i\in \{1,...,m\}, j \in \{1,...,a^*\} \right\}
    \end{align}
\end{itemize}

Later we prove, as they are invoked, that this definition of $c_i,$ where $i \in \{0,1,...,3m\}$ guarantees following lemmas:
\begin{enumerate}
    \item $c_0$ contains $m$ gaps, each of size $T+3a^*$
    \item Between each gap in $c_0$, there exists a run of $a^*$ elements
    \item It is not possible for any $e \in c_0$ to participate in another chain.
    \item If we were to construct an alternate collection of $3m$ sets (chains) from elements in the multi-set $\sum_{i=1}^{3m}c_i$, then that alternate collection would require at least one chain with a gap in it. 
\end{enumerate}

Finally we notice that $0$ occurs $3m$ times, so this construction needs at least $3m$ shuttles. Furthermore, due to 3) above, our construction will require a minimum of $3m+1$ shuttles, and our problem is to determine whether or not a bijection $\pi: [s] \rightarrow [s]$, recall $s = |\mathcal{S}| = \sum_{i=0}^{3m}|c_i|$, exists such that:
\begin{align}
    \Bigl| \bigcup_{p \in \mathcal{S}} \{p+ \pi(i) \}\Bigr| = 3m+1
\end{align}

To structure our proof, we first prove that if a solution exists in the 3-partition instance, then a solution of exactly $3m+1$ shuttles must exist for our problem. We prove afterwards that if a solution to our problem exists, then a solution to the 3-partition problem must also exist. \\


\textbf{Direction 1: 3-Partition $\rightarrow$ Shor Re-indexing}:
If a solution to the 3-Partition instance exists, then we know that $A$ can be partitioned into $m$ triples such that each triple sums to $T$. In the Shor re-indexing problem, we can place $c_0$ in the only allowable position (placement refers to the "ladder" picture as in Fig.~\ref{fig:steane_chains}). Invoking Lemma 1, we know that there will be $m$ gaps of size $T+3a^*$ to place the remaining elements after placing $c_0$. For each of the $m$ triples given by the 3-partition solution, we consider the corresponding triple of chains, that is, let us consider an arbitrary triple $(c_i, c_j, c_k)$ such that we know $a_i+a_j+a_k = T$. Chain $c_i$ has a length equal to $a_i+a^*$, therefore placing these three chains sequentially would occupy the following amount of space:

\begin{align*}
    (a_i+a^*)+(a_j+a^*)+(a_k+a^*)= a_i+a_j+a_k+3a^*  \\
    \quad = T+ 3a^* 
\end{align*}

Therefore each triple of chains would be packable into the gaps that occur within $c_0$. Therefore, all chains can be packed into $s$ positions exactly, implying the existence of a $3m+1$ shuttle solution.

\begin{lemma}
$c_0$ contains exactly $m$ gaps, each of size $T+3a^*$
\end{lemma}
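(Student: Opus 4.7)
The proof is essentially a direct computation, and the plan is to unpack the definition of $c_0$ into maximal runs of consecutive integers and then verify that exactly $m$ gaps appear between them, each of common length $T+3a^*$.

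First I would fix $i \in \{1, \dots, m\}$ and observe that $F(i, j+1) - F(i, j) = 1$, so for each such $i$ the set $B_i := \{F(i,j) : j = 1, \dots, a^*\}$ is a run of $a^*$ consecutive integers with endpoints
\begin{align}
\min B_i &= i(T+3a^*) + a^*(i-1) + (|S|+T) + 1, \\
\max B_i &= i(T+3a^*) + a^* i + (|S|+T).
\end{align}
From this I would show that the singleton $\{|S|+T\}$ together with the blocks $B_1, \dots, B_m$ are listed in strictly increasing order and are pairwise separated, yielding exactly $m$ candidate gaps: one between $|S|+T$ and $B_1$, and one between each consecutive pair $(B_i, B_{i+1})$ for $i \in \{1, \dots, m-1\}$.

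Next I would compute each gap's size. The initial gap has size $\min B_1 - (|S|+T) - 1$, and each inter-block gap has size $\min B_{i+1} - \max B_i - 1$. Direct substitution using the endpoint formulas above shows that both quantities simplify to $T + 3a^*$ independently of $i$, so all $m$ gaps are present and share the claimed length.

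The main (and only) obstacle is the bookkeeping: one must verify that the $+1$ contributed by each $j$, the cumulative shift $a^*(i-1)$, and the coarse shift $i(T+3a^*)$ combine precisely to yield a gap independent of $i$, and that no unexpected coincidences cause two blocks to merge or overlap. The latter is immediate because $T + 3a^* > 0$ strictly separates consecutive blocks, which also guarantees that the singleton $|S|+T$ lies strictly below $\min B_1$. Once these routine linear identities are checked, the lemma follows at once.
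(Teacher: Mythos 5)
Your proof is correct and follows essentially the same route as the paper's: it identifies the isolated element $|S|+T$ together with the maximal runs $\{F(i,1),\dots,F(i,a^*)\}$, and verifies the identities $F(1,1)-(|S|+T)-1 = T+3a^*$ and $F(i+1,1)-F(i,a^*)-1 = T+3a^*$. The only difference is organizational: the paper first tallies the total gap length (leaning on its Lemma 2 for the run structure) before checking each gap individually, whereas you establish the block decomposition inline and compute the $m$ gaps directly, which is slightly more self-contained but not a substantively different argument.
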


\begin{proof}
In the construction of $c_0$, we notice that the difference of the largest and smallest element is
\begin{align*}
    F(m,a^*) - (|\mathcal{S}|+T) = m(T+3a^*)+a^*+a^*(m-1) \\
    = m(T+4a^*)
\end{align*}
Therefore $m(T+4a^*) -1$ is the number of spaces between the first and last element, where the $-1$ is due to the number of spaces between, for example, $2$ and $0$ being $1$ in this context. As provided by Lemma 2, in those spaces, we have $(m-1)$ ridges of size $a^*$, one ridge of size $a^*-1$. So there are a total of $(m-1)a^*+(a^*-1)$ elements in this spaces, meaning that the total size of all gaps in $c_0$ is given equal to :

\begin{align*}
    F(m,a^*) - F(1,1 ) - 1 - ((m-1)a^*+(a^*-1)) \\
    = m(T+4a^*) - 1 - ((m-1)a^*+(a^*-1)) \\
    = m(T+4a^*) - 1 - (ma^*-a^*+a^*-1) \\
    = m(T+4a^*) - 1 - (ma^*-1) \\
    = m(T+4a^*) - ma^* \\
    = m(T+3a^*)
\end{align*}

Thus, there is a total of exactly $m(T+3a^*)$ gaps in $c_0$ as defined. Now to prove this corresponds to exactly $m$ gaps, where each is size $T+3a^*$. We show this by first noticing that aside from the first element, the gaps in $c_0$ occur on between elements $F(n+1, 1)$ and $F(n, a^*)$, for some $n \in \{1,...,m-1\}$. The first gap is size:
\begin{align*}
F(1,1) - |S|+T  -1 \\
=(T+3a^*)+1 + (|S|+T) - (|S|+T) - 1\\
= (T+3a^*)
\end{align*}
Then for all other gaps:
\begin{align*}
    F(n+1,1)-F(n,a^*) -1 \\
    = (n+1)(T+3a^*)+1+a^*n + (|S|+T) \\
    - (n(T+3a^*)+a^*+a^*(n-1)+ (|S|+T)) - 1\\
    = T+3a^* + a^*n - (a^*+a^*(n-1))\\
    = T+3a^*
\end{align*}

Therefore the lemma holds, and we have shown that $c_0$ contains exactly $m$ gaps of each size $T+3a^*$.
\end{proof}

\begin{lemma}
    Between each gap in $c_0$, there exists a run of $a^*$ elements
    \begin{proof}
        We first notice that if written in ascending order, the elements in $c_0$ are $|S|+T, F(1,1), F(1,2),...,F(1, a^*),F(2, 1),...,F(m,a^*)$. Gaps occur between the first and second element and also between $F(n+1,1)$ and $F(n,a^*)$, as discussed in Lemma 1. Therefore between $F(n,1)$ and $F(n,a^*)$ are $a^* - 2$ elements, and a total of $a^*$ elements following the gap defined by the difference of $F(n,1)$ and $F(n-1,a^*)$. This is true for all $n \in \{2,3,...,m\}$, as well as between the first and second element, that is between $F(1,1)$ and $|S|+T$.

        Therefore each gap in $c_0$ is followed by $a^*$ elements.
    \end{proof}
\end{lemma}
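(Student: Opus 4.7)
The plan is to read the structure of $c_0$ directly off its defining formula $F(i,j) = i(T+3a^*) + j + a^*(i-1) + (|S|+T)$, by grouping the elements according to the outer index $i$. First I would fix $i \in \{1,\dots,m\}$ and note that $F(i,j)$ is affine in $j$ with slope exactly $1$; consequently, as $j$ ranges over $\{1,2,\dots,a^*\}$, the values $F(i,1), F(i,2), \dots, F(i,a^*)$ form a block of $a^*$ consecutive integers. Call this the $i$-th block. This observation is essentially just noticing that the only $j$-dependence in $F$ is the single additive term $+j$.

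With the blocks in hand, I would list all elements of $c_0$ in increasing order: the singleton $|S|+T$, followed by block~$1$, then block~$2$, and so on up to block~$m$. By Lemma~1, $c_0$ has exactly $m$ gaps whose sizes sum to $m(T+3a^*)$, so these gaps must sit precisely between (i) $|S|+T$ and block~$1$, and (ii) block~$i$ and block~$i+1$ for $i=1,\dots,m-1$. To close the argument, I would observe that whatever sits immediately to the right of each gap is, by the previous step, one of the blocks, and is therefore a run of $a^*$ consecutive elements of $c_0$.

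The proof is essentially bookkeeping once $F(i,j)$ is regrouped by its $i$-coordinate; no new combinatorial idea is needed beyond what was already used for Lemma~1. The only mild subtlety I would take care with is the base case at the bottom of $c_0$: the singleton $|S|+T$ together with the first gap must be handled alongside the recurrent pattern so that the ``run after the first gap'' is correctly identified as block~$1$ of length $a^*$, matching the generic case for $i=2,\dots,m$.
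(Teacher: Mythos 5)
Your proof is correct and takes essentially the same route as the paper's: both regroup $c_0$ in ascending order as the singleton $|S|+T$ followed by $m$ blocks $F(i,1),\dots,F(i,a^*)$ of $a^*$ consecutive integers (since $F$ depends on $j$ only through the additive term $+j$), locate the gaps at the block boundaries, and conclude that each gap is followed by a run of $a^*$ elements. The only cosmetic difference is that you use Lemma~1's gap count to pin down where the gaps sit, whereas the paper cites Lemma~1 for the same purpose while counting the in-block elements directly; the arguments coincide.
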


\textbf{Direction 2: Shor Re-indexing $\rightarrow$ 3-Partition}:
Now we prove that any $3m+1$ shuttle solution to Shor re-indexing would also be a solution to the 3-partition problem. We prove this by proving that our construction could only ever provide a $3m+1$ shuttle solution if our definitions of $c_0,c_1,...,c_{3m}$ are followed exactly, which would then mean that each chain, aside from $c_0$, represents an element from $A$ encoded in unary, and as discussed in Direction 1, the packing of those chains corresponds to triples from $A$ each summing to $T$. In other words, we designed our reduction so that technically speaking, the idea of packing chains is not enforced, however by design, the only possible $3m+1$ shuttle solution will also be the solution to packing our chains as defined, and no other collection of chains would be feasible. To show that no other collection if feasible, we first prove the remaining two lemmas.

\begin{lemma}
    It is not possible for any $e \in c_0$ to participate in another chain.
    \begin{proof}
        We observe that $a^*< \frac{T}{4}$, and that $a < \frac{T}{2}$ for all $a \in A$. Therefore $\max_{e \in \sum_{i=1}^{3m} c_i}e < \frac{T}{4} + \frac{T}{2}$. The smallest value in $c_0$ is $|S|+T$, so if the largest element in $\sum_{i=1}^{3m} c_i$ and the smallest element in $c_0$ were in the same chain, a gap of size at least $|S|+T/4-1$ would be created. Because we are looking for a bijection $\pi$ that maps from $1,...,|S|$ to $1,...,|S|$, the longest possible chain (including its gaps) would be limited to length $|S|$, and the largest gap possible would be $|S|-2$. Therefore, because $|S|+T/4-1 > |S|-2$, as $T/4 > 0$, it would not be possible for any element of $c_o$ to participate in one of the other chains.
    \end{proof}
\end{lemma}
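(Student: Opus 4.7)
The plan is to argue by contradiction: if some element of $c_0$ shares a chain with an element of any other $c_i$ ($i\geq 1$), then the bijection $\pi:[s]\to[s]$ would have to realize a difference between two of its output values that exceeds $s-1$, which is impossible for a bijection onto $\{1,\dots,s\}$.

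First I would record two cheap estimates. On the one hand, $\min c_0 = |\mathcal{S}|+T$ directly by the construction of $c_0$ (the constant offset $|\mathcal{S}|+T$ is baked into $F$ and into the singleton element listed separately). On the other hand, using $a^*<T/4$ together with the 3-partition normalization $a_i < T/2$, every element appearing in $\bigcup_{i\geq 1} c_i$ is strictly less than $a^* + \max_i a_i < 3T/4$. Therefore any element of $c_0$ exceeds any element of $\bigcup_{i\geq 1}c_i$ by strictly more than $|\mathcal{S}| + T/4$.

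Next I would unpack what it means for two entries $p,p' \in \mathcal{S}$ to lie in the same chain: by the formulation of Shor re-indexing, their associated ancilla indices $j,j'$ must satisfy $p+\pi(j) = p'+\pi(j')$, so $|\pi(j)-\pi(j')| = |p-p'|$. But $\pi$ takes values in $\{1,\dots,s\}$ with $s = |\mathcal{S}|$, so the difference between any two of its values is at most $s-1$. Combining with the previous estimate forces $|\mathcal{S}| + T/4 \leq |\mathcal{S}| - 1$, which is impossible because $T > 0$; this contradiction closes the argument.

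The only real subtlety I expect is wording: the lemma must rule out $c_0$ sharing a chain with another $c_i$ under \emph{any} admissible bijection $\pi$, not merely under the one built in Direction~1. This universality is automatic from the bounds above (they do not depend on a particular $\pi$), but it is the single place where careless phrasing could hide a gap. No delicate calculation is needed — the entire obstruction is the separation $|\mathcal{S}| + T/4$ between the two groups of elements.
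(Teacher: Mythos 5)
Your proof is correct and takes essentially the same route as the paper's: both establish that every element of $c_0$ exceeds every element of $\bigcup_{i\geq 1}c_i$ by more than $|\mathcal{S}|+T/4$, and then observe that a bijection onto $\{1,\dots,|\mathcal{S}|\}$ cannot realize such a separation within one chain. The only difference is phrasing — the paper bounds the maximal chain span (gap at most $|\mathcal{S}|-2$), while you bound $|\pi(j)-\pi(j')|\le |\mathcal{S}|-1$ directly, which is the same obstruction.
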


\begin{lemma}
    If we were to construct an alternate collection of $3m$ sets (chains) from elements in the multi-set $\sum_{i=1}^{3m}c_i$, then that alternate collection would require at least one chain with a gap in it. (A chain with a gap in it refers to a nonconsecutive set of integers)
    \begin{proof}
        First we notice that all chains $c_1,...,c_{3m}$ have the first $a^*$ elements in common by definition. Because chains are defined as sets and not multisets, an alternate set of $3m$ chains would also need to start with same first $a^*$ elements in common. Furthermore, we notice that because all chains have the first element in common, for an alternate set $c_1',...,c_{3m}'$ to be distinct from $c_1,...,c_{3m}$, it would need to contain some chains with gaps in them. This is because chains without gaps correspond to values in $A$ plus a constant $a^*$, encoded in unary. An alternate set that also didn't have any gaps would correspond to the same multi-set of numbers in $A$ encoded in unary, and therefore $\{c_1,...,c_{3m}\}$ and  $\{c_1',...,c_{3m}'\}$ would be equivalent.
        
        To illustrate this further, consider our input multiset $\sum_{i=1}^{3m}c_i$, and consider creating $3m$ empty sets that will be our chains. If we take all the 0's from our input and distribute them among these chains, and then 1's and do the same, and so on, eventually we will arrive at a number $n_1$ with less than $3m$ copies. The chains that don't receive a copy of $n_1$ will no longer need to be considered if we are interested in generating sets of chains without gaps (as the lack of $n_1$ means we can't add further elements without inducing a gap). If we repeat this process until all values in our input have been assigned to chains, then we naturally will be left with a collection of chains that have lengths equal to the values in $A$ plus the constant $a^*$. Is should be straightforward to see there is only one collection of gap-free chains possible in this problem, where a collection has no ordering to it.
        
   Therefore we can say that any set of $3m$ chains partitionable from $\sum_{i=1}^{3m} c_i$ yet distinct from $\{c_1,...,c_{3m}\}$  would need to include some gaps in them. 
    \end{proof}
\end{lemma}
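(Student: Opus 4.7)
The plan is to show that any gap-free partition of the multiset $\sum_{i=1}^{3m} c_i$ into $3m$ chains is, as an unordered collection, precisely $\{c_1, \ldots, c_{3m}\}$. Once uniqueness is in hand the lemma follows immediately, since any alternate collection distinct from $\{c_1, \ldots, c_{3m}\}$ cannot be gap-free and therefore must contain at least one chain with a gap.

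First I would pin down the shape of the chains in any such gap-free partition. Because every $c_i = \{0, 1, \ldots, a^* + a_i - 1\}$ contains $0$, the multiset $\sum_{i=1}^{3m} c_i$ has exactly $3m$ copies of the value $0$. Since no chain (being a set) can hold two copies of $0$ and the partition uses only $3m$ chains, every chain must receive exactly one $0$. A chain that contains $0$ and has no gaps is necessarily an initial segment $\{0, 1, \ldots, L-1\}$ for some $L \geq 1$.

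Next I would run a double-counting argument to identify the multiset of chain lengths. For every integer $v \geq 0$, the multiplicity of $v$ in $\sum_{i=1}^{3m} c_i$ equals $|\{i : a^* + a_i > v\}|$, while in the alternate partition $v$ is contained in exactly those chains whose length $L_j$ satisfies $L_j > v$. Equating these counts for every $v$ forces the multiset $\{L_1, \ldots, L_{3m}\}$ to coincide with $\{a^* + a_1, \ldots, a^* + a_{3m}\}$. Since an initial-segment chain is uniquely determined by its length, the partition is thus uniquely determined up to chain labels, and matches $\{c_1, \ldots, c_{3m}\}$ exactly.

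The main obstacle I anticipate is tracking the counting across all $v$ simultaneously --- especially in edge cases where $a^*$ is small (so the shared prefix across the $c_i$ shrinks or even vanishes) or where several $a_i$ coincide (so that many chains have the same length and the matching of multisets becomes less visually transparent). Once one confirms that matching the multiplicity profile at every $v$ does force equality of the length multisets, the uniqueness claim and hence the lemma follow.
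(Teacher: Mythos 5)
Your proposal is correct and follows essentially the same route as the paper: both reduce the lemma to uniqueness of the gap-free decomposition, using that the $3m$ copies of $0$ force every chain to contain exactly one $0$ (hence be an initial segment $\{0,\dots,L-1\}$) and that the multiplicity profile of $\sum_{i=1}^{3m}c_i$ then pins the chain lengths down to the multiset $\{a^*+a_1,\dots,a^*+a_{3m}\}$. Your explicit double-count over all values $v$ is just a cleaner, more rigorous rendering of the paper's level-by-level "distribute the $0$'s, then the $1$'s, \dots" argument, so there is no gap to report.
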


We use the above lemmas to prove that no other collection of chains would result in a $3m+1$ solution. For the sake of contradiction, let's assume that such an alternate set exists. By Lemma 3, we know that $c_0$ must remain unchanged, as $c_0$ constitutes $1$ shuttle, and the other $3m$ shuttles must come from the other elements, as $0$ occurs $3m$ times. Therefore this alternate set of chains would be $\{c_0,c_1',...,c_{3m}'\}$, and a gap must occur at least once in a chain in $\{c_1',...,c_{3m}'\}$ by Lemma 4. By Lemma 1, we know there are $m$ gaps of size $T+3a^*$, and thus to have a $3m+1$ solution, we must be able to pack $\{c_1',...,c_{3m}'\}$ into those $m$ gaps of size $T+3a^*$. Therefore the gaps introduced by using an alternate set of chains must be filled by other chains. As all gaps are by definition of size less than $a^*$, the ridges (Lemma 2) within $c_0$ cannot be used to fill them, which then leaves the only possibility to be that within each of the $m$ gaps in $c_0$, other chains would fill gaps. However, the structure of any possible packing of chains from $\{c_1',...,c_{3m}'\}$ into a triple $(c_i,c_j,c_k)$ would be a run of at least $a^*$ elements , followed potentially by a region of gaps with possible some elements in between (all of those gaps are size less than $a^*$), followed again by a run of at least $a^*$ elements, another region of gaps, and so on. This corresponds to each chain beginning with a consecutive sequence of at least $a^*$ elements. Because of this, it is not possible for the beginning of say chain $c_j$ to fill the gaps present in chain $c_i$. Therefore this alternate set of chains would not give way to a valid $3m+1$ solution, as it would be impossible to pack, and therefore there will not exist a bijection that satisfies Equation (9). Therefore by contradiction, the only possible $3m+1$ solution is to use the initially defined sets $\{c_0,c_1,...,c_{3m}\}$.

Therefore, if we have a $3m+1$ solution to Shor re-indexing, then it follows that we have successfully packed the elements of $A$ plus some constant $a^*$, encoded in unary, into $m$ gaps, each of size $T+3a^*$. From this point, we can recover a solution to the arbitray 3-partition problem instance.

Therefore Shor re-indexing is at least as hard as 3-partition. Therefore Shor re-indexing is NP-hard.

\subsection{Preparation of Cat States\label{sec:cat_prep}}
Here, we present a possible simple algorithm for reordering a linear chain of qubits, provided we have access to another blank row of qubits, which can then be applied to generating the interwoven cat states that go along with compiled Shor-syndrome circuits, as we assume that nearest neighbor cat states can be generated.  

\begin{figure}
    \centering
    \includegraphics[width=\linewidth]{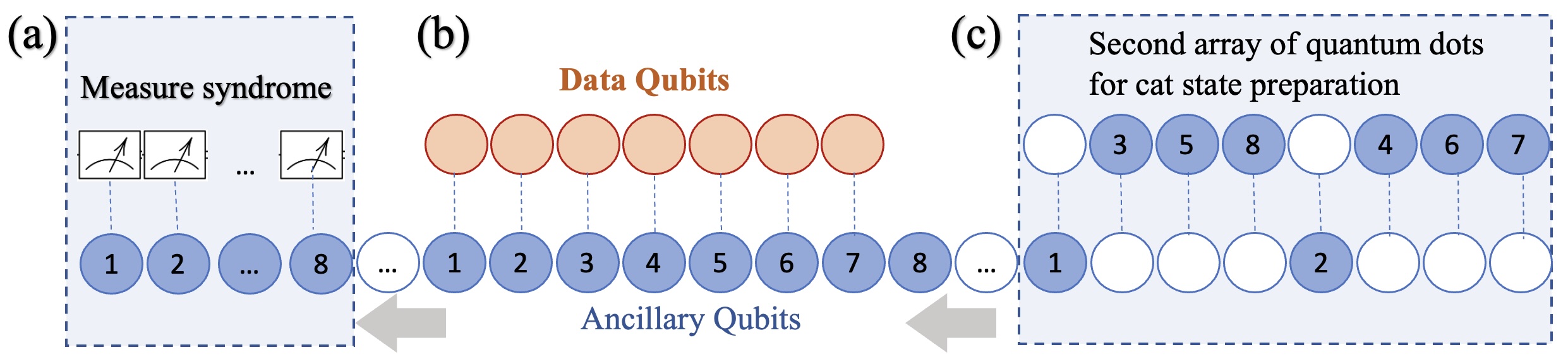}
    \caption{\textbf{Illustration of the full hardware running a Shor syndrome circuit}. a) After performing syndrome extraction by shuttling our qubits into the proper configurations and applying the appropriate two-qubit gates, the ancilla must then be shuttled to an area where quantum measurement can be carried out. Here it is represented as a continuation of the 1-D architecture, however this is mostly illustrative. b) After generating the required ancilla in their proper interwoven cat states, the ancilla are then shuttled into the data qubit area where we perform syndrome extraction, shuttling to different values of $\delta$, as described in the main text. c) In this area, we propose using 2 rows of quantum dots to generate the interwoven cat states required by how we compiled the syndrome extraction circuit. 
}
    \label{fig:full_2xn}
\end{figure}

\begin{figure}
    \centering
    \includegraphics[width=\linewidth]{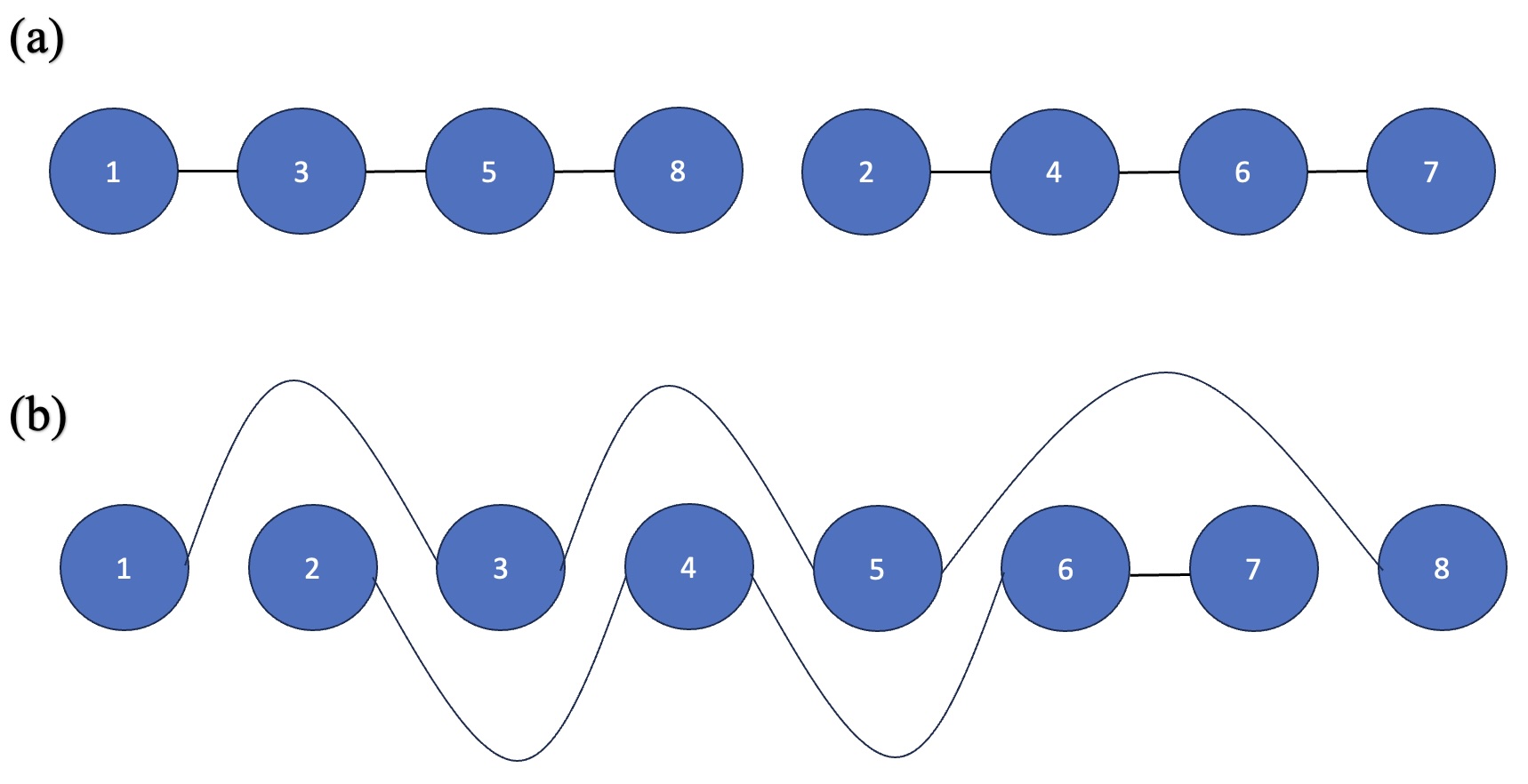}
    \caption{\textbf{Illustration of the cat state preparation problem}. a) An initial collection of cat states, generated utilizing nearest neighbor connectivity. 
    b) The order of cat states necessitated by the compilation of a Shor syndrome circuits. As this is the desired order, we enumerates these consecutively. "Un-weaving" this gives us the numbers present in a), and now all we need to do is use empty quantum dots to sort our initial sequence of cat states.
}
    \label{fig:cat_prep}
\end{figure}

Referring to Fig.~\ref{fig:cat_prep}a, we start with an array of adjacent cat states. After applying the bijection $\pi$ found by compilation to the ancillary qubit indices, we will often need to reorder this initial cat state such that the each stabilizer check is still being interacted with the correct state. Fig.~\ref{fig:cat_prep}b is some example arbitrary ordering we might need to obtain. To achieve this we generate the cat states in an adjacent array of quantum dots, as shown in Fig.~\ref{fig:full_2xn}c. We now can read the labels of the qubits from left to right, shuttling to the main array the largest consecutive sequence that starts at 1 (the first step of this is depicted in Fig.~\ref{fig:full_2xn}c.) Next we can shuttle these qubits to the left and out of the way of the remaining qubits in the top row, so that we can repeat this process until all ancillary qubits on present on the bottom row of qubits, in the proper order. Each time we shuttle down qubits, we start the sequence at one more than the previous sequences last element. For example, again in Fig.~\ref{fig:full_2xn}c, the second step would shuttle down qubits 3 and 4, the third step would shuttle down 5, 6, 7, and the final would bring down qubit 8.

\subsection{Simulation Results\label{sec:sim_results}}
Through simulation, we show that our methods of circuit compilation can be used to reduce the threshold of various error correcting codes. We model Pauli noise with highly efficient Pauli-Frame simulation techniques~\cite{stim,knill_pf} implemented in Julia~\cite{quantumclifford}. We simulate at least 20,000 samples for each set of noise parameter values (each data point in our plots) in order to calculate the logical error rate for a single round of error correction. The simulated circuits contain the following steps:

\begin{figure}
    \centering
    \includegraphics[width=\linewidth]{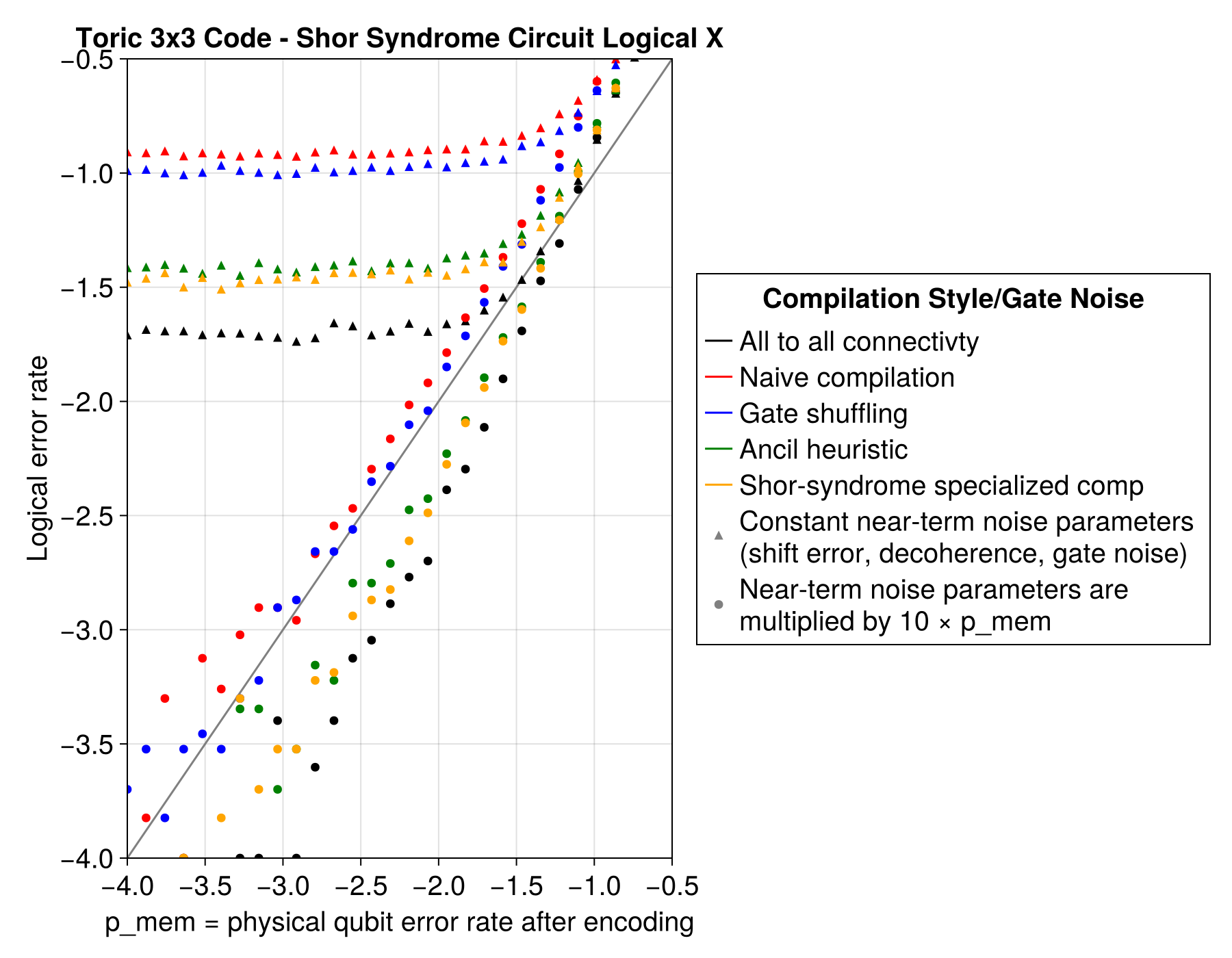}
    \caption{\textbf{Performance of Toric 3x3 code's Shor syndrome circuit on $2 \times n$ architecture}. The $y$-axis is $\log_{10}$ of the logical error rate, and the $x$-axis is $\log_{10}(p_\text{mem})$. The different colors correspond to different levels of compilation, noting that "Naive compilation" corresponds to running canonical syndrome extraction circuits without any specialization to this hardware. "All to all" corresponds to setting $p_\text{wait}$ and $p_\text{shuttle}$ to 0. This is only the plot of the logical $X$ error rate, although the plot for the logical $Z$ error rate looks quite similar. See Fig.~\ref{fig:toric_plots} for the logical $Z$ error rate, as well as the plots of this code's performance on a naive syndrome extraction circuit.
}
    \label{fig:toric3x3_plot}
\end{figure}
\begin{itemize}
    \item \textbf{Encoding} For the encoding step, one can use either an encoding circuit or measure the stabilizers to project onto a stabilized space, tracking initial corrections as needed (in Pauli frame simulation, these corrections can be ignored). As the latter is more resilient to error, that is how we carry out encoding in our simulations, although the same result can be achieved by simulating an error-free encoding circuit. 
    \item \textbf{Memory noise} After encoding, we apply a round of Pauli noise on each data qubit with independent probability $p_{\text{mem}}$. This is the typical place to insert errors when evaluating a code or decoder in non-fault-tolerant simulation. \\
    \item \textbf{Syndrome measurement} Syndrome measurement will consist of a few pieces. First is the syndrome circuit itself, which is the output of our compilation algorithm (our algorithm transforms a syndrome circuit into an equivalent syndrome circuit). Then we initialize a variable $\delta_{\text{current}}=0$ to track the current state of the $2 \times n$ hardware. If the next gate $\mathcal{G}_{\text{next}}$ to be run in the syndrome circuit has $\delta(\mathcal{G}_{\text{next}}) \neq \delta_{\text{current}}$, then we simulate the noise from a shuttle by applying Pauli Z errors independently to each of the idling data qubits with probability $p_\text{wait}$, and likewise we apply Pauli noise to each ancilla qubit with probability $p_\text{shuttle}$. Afterwords, we set $\delta_\text{current} = \delta(\mathcal{G}_{\text{next}})$.  Furthermore, we apply Pauli noise on both qubits after each two-qubit gate with probability $p_{\text{gate}}$.
     \item \textbf{Logical operator measurement} In order to have something to compare our after-decoding results to, we measure the logical X \textit{or} Z operators at the end of each syndrome circuit, as if it were an additional parity check. More specifically, we cannot measure both operators in the same circuit, so we need to run two simulations in order to obtain both the logical X and Z error rates.
     \item \textbf{Decoding} 
     We now run our decoder of choice on the syndromes, and see if the predicted error would cause the same logical error reported in the logical X/Z syndrome. If we predicted the logical error correctly, then that counts as successfully decoded. The logical error is then defined as the ratio of successfully decoded samples to total samples. A lookup table decoder is used for small ($d=3$) codes such as Steane's [[7,1,3]] code~\cite{steane7}, and MWPM~\cite{pymatching} is used for toric codes. 
     \end{itemize}

In our plots, we vary $p_{\text{mem}}$ along the horizontal axis and use constant values for other types of error. Specifically, we use a value of $p_\text{shuttle} = 0.0001$ from~\cite{zwerver23shuttling}, a potentially near-term achievable value of $p_{\text{gate}}=0.9995$, and $p_\text{wait}= 1-e^{\frac{-(2\mu\text{s}+12.5\mu\text{s})}{28\text{ms}}}$. The shuttle time was said to be $2\mu$s ramp and the fastest time between shuttles was stated to be $12.5\mu$s in~\cite{zwerver23shuttling}. $28$ms was the T2 time reported in~\cite{Veldhorst_2014} (recently great effort has gone into improving to these times~\cite{Wang2024}). These near-term parameters are represented by the triangles in Fig.~\ref{fig:toric3x3_plot}. Notice that these curves have error floors due to constant error terms that do not depend on $p_\text{mem}$. However, the curves plotted by circles correspond to scaling $p_\text{wait}$, $p_\text{shuttle}$, and $p_\text{gate}$ each by $10 \times p_\text{mem}$, creating curves with pseudo-thresholds, as one may be used to. 

Fig.~\ref{fig:toric_plots} presents the compilation performance of both naive and Shor syndrome circuits for the 3x3 toric code. These plots are similar to the style of Fig.~\ref{fig:toric3x3_plot}, except we now have both the logical X and Z error rates corresponding to the top and bottom rows of the figure respectively. The left column corresponds to the performance on naive syndrome circuits, while the right column corresponds to performance on fault-tolerant Shor-syndrome circuits.

\newpage
\pagebreak

\begin{figure}[tb]
    \centering
    \includegraphics[width=.9\paperwidth]{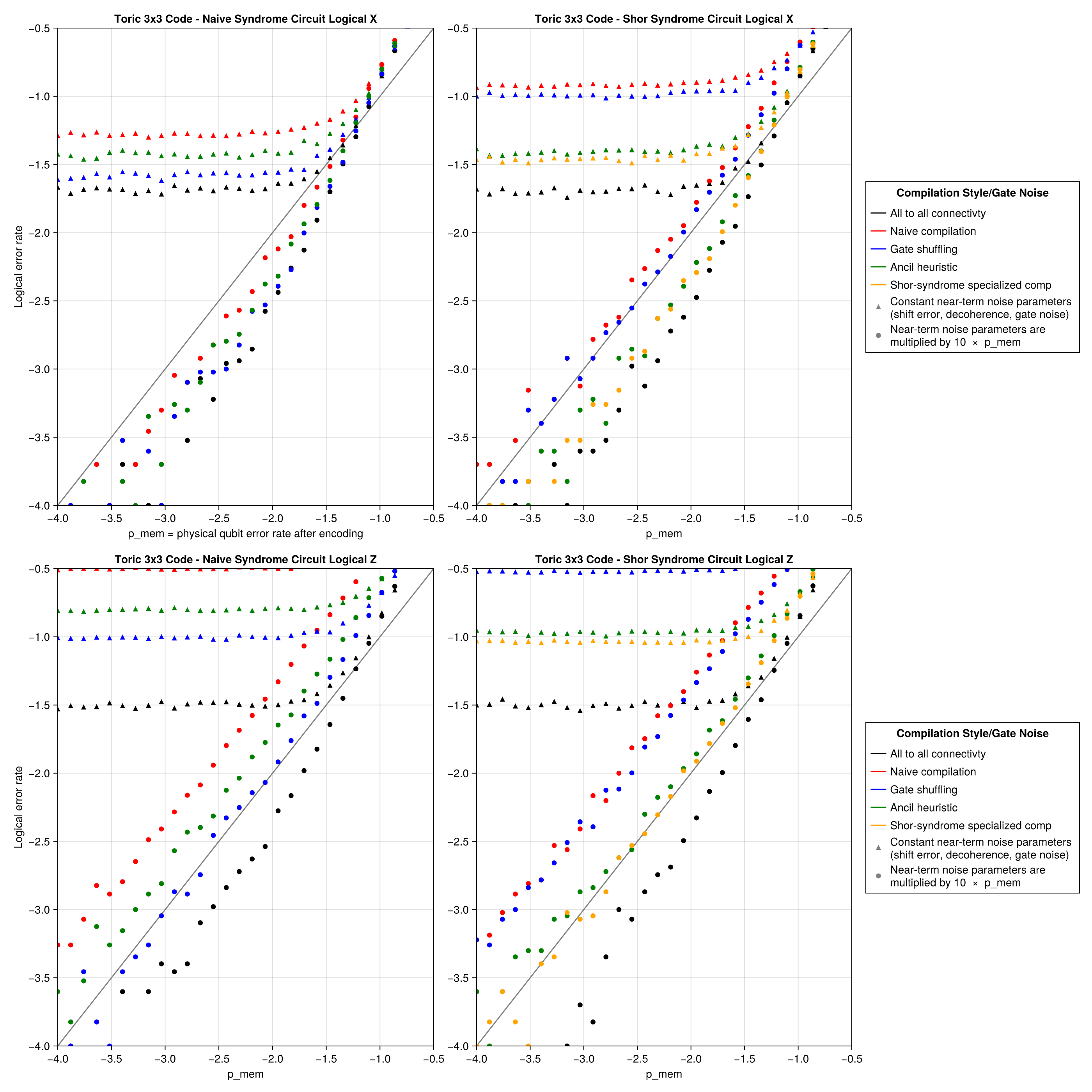}
    \caption{\textbf{Compilation performance of the 3x3 Toric code}. The toric code's naive syndrome measuring circuit is an example of where the heuristics break down a little. They still perform better than doing nothing, but as shown here, and in Table~\ref{tab:naive_XZ}, gate shuffling alone finds a decent solution of 12 shuttles regardless of the lattice size, whereas the ancillary qubit re-indexing heuristics find a number of shuttles that grows with lattice size.
}
    \label{fig:toric_plots}
\end{figure}

\end{document}